\documentclass{article}

\usepackage{microtype}
\usepackage{graphicx}
\usepackage{subcaption}
\usepackage{booktabs} % for professional tables

\usepackage{hyperref}

\usepackage[accepted]{icml2026}

\usepackage{amsmath,amsfonts}
\usepackage{amssymb}
\usepackage{mathtools}
\usepackage{amsthm}
\usepackage{bbm}
\usepackage{flushend}

\usepackage{color}
\usepackage{xcolor} %for coloring changes

\usepackage{enumitem}

\usepackage{comment}
\usepackage[subtle]{savetrees}
\theoremstyle{plain}
\newtheorem{theorem}{Theorem}[section]
\newtheorem{proposition}[theorem]{Proposition}
\newtheorem{lemma}[theorem]{Lemma}

\theoremstyle{definition}
\newtheorem{definition}[theorem]{Definition}

\theoremstyle{remark}

\newcommand{\obj}{\texttt{Obj}}
\newcommand{\opt}{\texttt{OPT}}
\renewcommand{\epsilon}{\varepsilon}
\icmltitlerunning{Fairness in Aggregation: Optimal Top-$k$ and Improved Full Ranking}

\begin{document}

\twocolumn[
  \icmltitle{Fairness in Aggregation: Optimal Top-$k$ and Improved Full Ranking}

  % It is OKAY to include author information, even for blind submissions: the
  % style file will automatically remove it for you unless you've provided
  % the [accepted] option to the icml2026 package.

  % List of affiliations: The first argument should be a (short) identifier you
  % will use later to specify author affiliations Academic affiliations
  % should list Department, University, City, Region, Country Industry
  % affiliations should list Company, City, Region, Country

  % You can specify symbols, otherwise they are numbered in order. Ideally, you
  % should not use this facility. Affiliations will be numbered in order of
  % appearance and this is the preferred way.
  \icmlsetsymbol{equal}{*}

  \begin{icmlauthorlist}
    \icmlauthor{Diptarka Chakraborty}{nus}
    \icmlauthor{Arya Mazumdar}{ucsd}
    \icmlauthor{Barna Saha}{ucsd}
    \icmlauthor{Alvin Hong Yao Yan}{nus}
  \end{icmlauthorlist}

  \icmlaffiliation{nus}{National University of Singapore}
  \icmlaffiliation{ucsd}{University of California San Diego}

  \icmlcorrespondingauthor{Alvin Hong Yao Yan}{alviny@u.nus.edu}

  % You may provide any keywords that you find helpful for describing your
  % paper; these are used to populate the "keywords" metadata in the PDF but
  % will not be shown in the document
  \icmlkeywords{Rank Aggregation, Fairness, Spearman footrule, Top-k Ranking}

  \vskip 0.3in
]

% this must go after the closing bracket ] following \twocolumn[ ...

% This command actually creates the footnote in the first column listing the
% affiliations and the copyright notice. The command takes one argument, which
% is text to display at the start of the footnote. The \icmlEqualContribution
% command is standard text for equal contribution. Remove it (just {}) if you
% do not need this facility.

% Use ONE of the following lines. DO NOT remove the command.
% If you have no special notice, KEEP empty braces:
\printAffiliationsAndNotice{The author order is alphabetical.}  % no special notice (required even if empty)
% Or, if applicable, use the standard equal contribution text:
% \printAffiliationsAndNotice{\icmlEqualContribution}

\begin{abstract}
    
    Ensuring fairness in algorithmic ranking systems is a critical challenge with significant societal implications for hiring, recommendations, web search, and data management. Standard methods for aggregating multiple preference orders into a consensus ranking may perpetuate and even amplify the lack of representation of underrepresented groups. To address this, recent research has focused on incorporating fairness constraints to ensure the presence of different groups in the top-$k$ positions of the final aggregate ranking.

    We study two fairness-aware variants under the well-known Spearman footrule, which corresponds to the $L_1$ distance between rankings. First, we address the practically salient task of computing a fair aggregate top-$k$ ranking -- crucial in settings like recommendations and hiring where selection is primarily based on the top-$k$ results -- and present the first optimal algorithm for this problem. Second, we consider fair (full) rank aggregation over all candidates (not specifically on top-$k$). We already know of a $3$-approximation for this fair rank aggregation variant (Wei et al., SIGMOD’22; Chakraborty et al., NeurIPS’22), whereas an exact algorithm exists for the corresponding unconstrained (unfair) version (Dwork et al., WWW’01). Closing the computational gap between fair and unconstrained rank aggregation has remained a tantalizing open problem. We make significant progress by giving a $2$-approximation algorithm for fair (full) rank aggregation, improving substantially over the previous $3$-approximation. Further, we complement our theoretical contributions with experiments on different real-world datasets, which corroborate our theoretical results and demonstrate strong empirical performance relative to state-of-the-art baselines.

\end{abstract}

% Lay summary: When making decisions on a set of candidate options, we often rank the candidates in order of preference. Given multiple rankings over the same set of candidates, they need to be combined into a single consensus ranking. Often, this is done through an algorithm that finds one that is sufficiently 'close' to the individual input rankings. While there are several well-studied algorithms to do so, simply finding a ranking sufficiently 'close' to the input rankings risks finding an overall ranking that under-represents some groups of candidates in the most important (higher) ranks. To produce a consensus ranking that avoids this, a commonly used notion of fair ranking requires the output ranking to have sufficient representation from each group in the higher ranks. This requires new algorithms, and so we design and analyze algorithms that produce a fair aggregate ranking that is 'closer' to the input rankings than previous state-of-the-art. Our algorithms work well in real-world data, and suggest potential for real-world application as part of algorithmic ranking systems that promote greater fairness across groups.

\section{Introduction}

The problem of aggregating multiple, often conflicting, preference orderings or rankings over a set of alternatives or candidates into a single consensus ordering, known as \emph{rank aggregation}, is one of the most fundamental challenges, dating back to as early as the 18th century~\cite {borda1781memoire,de2014essai}. This task is prevalent in numerous real-world scenarios, including hiring, recommendation systems, web searches, scholarship selections, and many more. The problem is often formulated as an optimization task over permutation/ranking metric space: given a set of rankings, find a ranking that minimizes the sum of distances to the input rankings. The ubiquity of this problem in scenarios requiring a selection or the generation of ranked suggestions has elicited considerable attention from the computational perspective~\cite{dwork2001,fagin2003efficient,gleich2011,azari2013,li2017efficient}.

One of the classical distance metrics defined over a set of rankings is the \emph{Spearman footrule} distance -- it quantifies how far two rankings differ by summing absolute rank shifts across all candidates~\cite{spearman1904proof, spearman1906footrule}, i.e., the rank-wise $L_1$ distance (see Section~\ref{sec:prelims} for a formal definition). As an $L_1$-type measure, it emphasizes absolute positional disagreements and tends to be more responsive to large movements than counts of pairwise inversions. It is widely used in databases~\cite{fagin2003efficient}, clustering and analysis of ranked data~\cite{marden1996analyzing}, meta searches and learning to rank in information retrieval~\cite{aslam2001models, liu2009learning}, web searches~\cite{dwork2001}, bioinformatics and genomics~\cite{pihur2009rankaggreg} and social choice~\cite{balinski2011majority}. Computationally, it is straightforward to evaluate. It also makes the rank aggregation problems tractable~\cite{Dwork2002RankAR}; in fact, aggregation under the footrule closely aligns with median-based aggregation, enabling simple heuristics that are effective in database settings~\cite{fagin2003efficient}. Computational tractability is a major advantage compared to some other relevant metrics on rankings, such as \emph{Kendall–tau} -- yet another classical distance metric. Despite its simplicity, the footrule is tightly related to \emph{Kendall–tau} being equivalent up to a factor of two~\cite{diaconis1977spearman}. Finally, in many practical contexts, such as hiring, the top-$k$ positions of a ranking are only (the most) relevant; unlike several other metrics, the footrule naturally extends to truncated lists and accommodates missing ranks, further enhancing its utility in real-world scenarios~\cite{fagin2003comparing}.

This paper extends the classical rank aggregation task by incorporating additional constraints in the final output to ensure fair representations. The high-stakes nature of decisions made with ranking algorithms -- such as offering employment, granting admissions to educational institutions, allocating loans, or prioritizing medical care during crises -- underscores the need for proportional outcomes. Without explicit fairness considerations, these systems risk perpetuating systemic biases and harmful stereotypes against underrepresented groups, which are often defined by sensitive attributes like race, gender, or caste~\cite{kay2015unequal, costello2016views, bolukbasi2016man}. In many countries, fairness constraints are enforced by legal norms to counteract historical discrimination, such as reservation systems for jobs and education in India~\cite{borooah2010social}, affirmative actions in employment in South Africa~\cite{burger2010affirmative, thaver2017affirmative}, and candidate selection by political parties in Spain~\cite{verge2010gendering}.
%\am{I think this text in red can be removed, or perhaps move to the impact statement at the end.}

To address the concerns on fairness and diversity in ranking algorithms,~\cite{wei2022rank, chakraborty2022} initiated the study of the rank aggregation problem considering the notion of proportional fairness~\cite{baruah1996proportionate}. This fairness criterion mandates that the representation of each protected group within the top-$k$ -- often the most relevant -- segment of the final ranking is appropriately balanced; more specifically, it satisfies (i) \emph{minority protection}: each group has at least a certain (input specified) fraction of representative in the top-$k$ positions, and (ii) \emph{restricted dominance}:  none of the groups has more than a certain (input specified) fraction of representative in the top-$k$ positions (see Section~\ref{sec:prelims} for a formal definition).

In the \emph{fair rank aggregation} problem, given a set of input rankings over $d$ candidates, the objective is to find a fair ranking that minimizes the sum of distances to the input rankings. In this paper, we focus on the Spearman footrule distance. This problem was first explored by~\cite{wei2022rank, chakraborty2022}, who proposed a simple baseline meta-algorithm. Their approach is to compute a closest fair ranking for each input permutation, then select the one with the minimum sum of distances to all inputs. Since a polynomial time algorithm for finding a closest fair ranking is already known~\cite{celis2018ranking}, through a direct application of the triangle inequality, this meta-algorithm yields a $3$-approximation. There is no better bound known for this fair variant until now, which is in stark contrast with the classical (unconstrained) rank aggregation problem that is solvable in polynomial time~\cite{Dwork2002RankAR}. Closing this complexity gap between the unconstrained and fair versions remains an important open problem, as also explicitly mentioned in~\cite{wei2022rank}.

For many real-world applications, such as hiring, where open positions are limited, the primary interest lies only in the top-$k$ portion of the final ranking. Furthermore, the requirements of many information retrieval applications necessitate a focus solely on the top-$k$ ranked items. Such applications motivate the introduction of \emph{generalized Spearman footrule} to study top-$k$ lists~\cite{fagin2003comparing}. While the best current algorithm achieves a $3$-approximation for the full ranking, it does not offer an explicit guarantee when attention is confined to the top-$k$ list/ranking. %It is, however, possible to adapt the approach from~\cite{wei2022rank, chakraborty2022} to derive a similar approximation guarantee for the top-$k$ setting. 
This naturally raises a key question: can we develop an algorithm with a similar or even better performance when only the top-$k$ positions are of interest?

\noindent \textbf{Our Contribution.}
Our first main contribution is an algorithm to compute an optimal fair top-$k$ ranking, more specifically, an optimal $k$-fair top-$k$ ranking (see Definition~\ref{def:general-problem}).
\begin{theorem}[Informal]
    There is a polynomial-time algorithm that, given a set of rankings over $d$ candidates and an integer $k$, computes an optimal aggregate $k$-fair top-$k$ ranking.
\end{theorem}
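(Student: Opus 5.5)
The plan is to reduce the problem to a minimum-cost flow computation, extending the bipartite-matching approach of Dwork et al.\ for unconstrained footrule aggregation. A top-$k$ ranking assigns $k$ distinct candidates to positions $1,\dots,k$ and leaves the other $d-k$ unranked. Under the generalized Spearman footrule, each unranked candidate is given a fixed default position (for instance $k+1$, or the location parameter of Fagin et al.). The key point is that the objective decomposes candidate by candidate. For each input ranking $\pi_i$, the distance is
\[
\sum_{c} \bigl|\sigma(c)-\pi_i(c)\bigr|,
\]
where $\sigma(c)$ is either a position $p\le k$ or the default value. Summing over $i$, the total cost is a sum of terms $w(c,p)=\sum_i |p-\pi_i(c)|$, one for each assigned pair, plus $w(c,\bot)=\sum_i|\ell-\pi_i(c)|$ for each unranked $c$. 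Here $\ell$ is the default position.

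I would build the following flow network:
\begin{itemize}
\item a source with an arc to every candidate, of capacity $1$;
\item from candidate $c$ in group $g$, an arc to a group node $g$ at cost $0$;
\item from candidate $c$, an arc to a ``discard'' sink at cost $w(c,\bot)$;
\item from group node $g$, arcs to position nodes.
\end{itemize}
This third layer has a problem: the cost $w(c,p)$ depends on both the candidate and the position, so routing through the group node would lose it.

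To fix this, I would use the structure of $k$-fair top-$k$. As I read Definition~\ref{def:general-problem}, the fairness requirement applies only to the composition of the whole top-$k$ set: each group $g$ has between $\alpha_g k$ and $\beta_g k$ members, rounded appropriately. It places no constraint on prefixes. Given this, the order within the chosen set is irrelevant to feasibility. I would therefore use candidate-to-position arcs $c\to p$ with cost $w(c,p)$, and enforce the group counts through a side constraint.

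Concretely, I would build a layered network:
\begin{itemize}
\item source $\to$ group node $g$, with lower bound $\lceil\alpha_g k\rceil$ and upper bound $\lfloor\beta_g k\rfloor$;
\item $g\to c$ for each $c\in g$, capacity $1$;
\item $c\to p$ for $p\in[k]$, cost $w(c,p)-w(c,\bot)$;
\item $p\to$ sink, capacity $1$.
\end{itemize}
The total flow is required to be exactly $k$, and the constant $\sum_c w(c,\bot)$ is added to the cost. An integral flow then corresponds exactly to a $k$-fair top-$k$ ranking, with the same objective value.

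The network has totally unimodular constraints: it is a circulation with lower and upper bounds. Hence min-cost flow with lower bounds returns an integral optimum in polynomial time, provided a feasible flow exists. Feasibility is itself checkable by a max-flow computation, and it corresponds exactly to the existence of a fair top-$k$ set.

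The main obstacle is whether Definition~\ref{def:general-problem} constrains only the top-$k$ set, or also every prefix $j\le k$ (as in Celis et al.). If it imposes prefix constraints, the flow structure breaks, because the group bound at prefix $j$ couples group membership with position. In that case I would try to prove an exchange lemma: some optimal fair solution places the selected candidates of each group in the same relative order, sorted by median rank. I would also try to show that, because $|p-x|$ is convex, an optimal solution can be found by first choosing the multiset of group counts per prefix. This would be done by a DP over positions, with state equal to the vector of group counts, which is polynomial for a constant number of groups. Within each group, the selected candidates would then be assigned greedily. The bulk of the work would be verifying that this uncrossing argument preserves the footrule cost.
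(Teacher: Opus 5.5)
Your main construction is correct, but it reaches the result by a different route from the paper.

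\textbf{The paper's route.} The paper writes the problem as the integer program \ref{LP:topkonly} on variables $x_{ij}$. It proves the constraint matrix totally unimodular by a case analysis based on the row-bipartition criterion of Theorem~\ref{thm:totally-unimod-characterization}. It then obtains an integral optimal vertex of the LP relaxation with the ellipsoid method.

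\textbf{Your route.} Your network realizes the same polytope concretely. Projecting an integral flow onto the arcs $c\to p$ recovers exactly constraints (1)--(4):
\begin{itemize}
\item the unit capacities on $g\to c$ give (1);
\item the unit position arcs together with total flow $k$ give (2);
\item the source-to-group bounds give (3) and (4).
\end{itemize}
With $\ell=k+1$, your cost $\sum_c w(c,\bot)+\sum_{c,p}(w(c,p)-w(c,\bot))f(c,p)$ equals the paper's objective exactly. Integrality then follows from total unimodularity of node--arc incidence matrices, with no need for the row-partition argument.

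\textbf{Comparison.} Your route yields a purely combinatorial, strongly polynomial algorithm via any standard min-cost flow routine with lower bounds. It avoids the ellipsoid method, vertex-finding, and objective perturbation. The negative arc costs are harmless because the network is acyclic. It also adapts directly to the modified objectives the paper uses later in Section~\ref{sec:approx-full} and the appendix, since only the arc costs change. The paper's route keeps everything in a single LP framework that it reuses verbatim for those variants.

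\textbf{Two small points.}
\begin{itemize}
\item You have the rounding backwards. Definition~\ref{def:fair-ranking} uses $\lfloor\alpha_g k\rfloor$ as the lower bound and $\lceil\beta_g k\rceil$ as the upper bound. With your choice of $\lceil\alpha_g k\rceil$ and $\lfloor\beta_g k\rfloor$, the network can exclude genuinely fair rankings, returning a suboptimal ranking or reporting infeasibility when a fair ranking exists. Swapping the bounds fixes this.
\item Your reading that only the composition of the top-$k$ set is constrained, with no prefix constraints, is the correct one. The contingency DP is therefore unnecessary; it would also only be polynomial for a constant number of groups.
\end{itemize}
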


Our primary contribution is a novel method for finding an optimal solution that perfectly adheres to the fairness constraints. To establish the above result, we first formulate the problem as an integer linear program (ILP) and consider its linear program (LP) relaxation. Typically, a natural next step would be to solve the LP and then round a fractional optimal solution. However, for problems with fairness constraints, such rounding often causes violations in fairness by up to a certain factor (e.g.,~\cite{bera2019fair, ChenFLM19, ahmadian2023improved, duppala2025proportionally}). To circumvent this issue of potential violation of the fairness constraint, we deviate from the usual rounding step. Instead, we study the constraint matrix and show that it is \emph{totally unimodular} by leveraging its structural properties. This enables us to argue that the LP has an integral optimal solution, which we can obtain just by solving the LP (using the Ellipsoid method). We provide the details in Section~\ref{sec:topK}. To the best of our knowledge, we are the first to use this novel technique for problems with fairness constraints.

Next, we focus on the fair (full) rank aggregation problem, where the task is to output a fair aggregate ranking over all the candidates (not just top-$k$). The unconstrained variant can be solved (exactly) in polynomial time by reducing it to a minimum cost perfect matching problem on a complete bipartite graph~\cite{Dwork2002RankAR}. By following the same reduction, it is not hard to observe that solving a fair variant of this minimum cost perfect matching problem suffices to get a solution to the fair rank aggregation problem. However, we show that such a fair matching problem, even on a complete bipartite graph, is \texttt{NP}-hard (see the Appendix~\ref{sec:hardness}).

We then design an algorithm that attains a $2$-approximation for the fair (full) rank aggregation problem, improving over the state-of-the-art $3$-approximation~\cite{wei2022rank, chakraborty2022}.

\begin{theorem}[Informal]
    There is a polynomial-time algorithm that, given a set of rankings over $d$ candidates, computes a $2$-approximate fair aggregate full ranking.
\end{theorem}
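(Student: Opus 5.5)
The plan is to exploit that the fairness constraints only govern \emph{which} set $S$ of $k$ candidates occupies positions $1,\dots,k$. For a candidate $c$ and position $j$, write $f_c(j)=\sum_i |\pi_i(c)-j|$. The footrule objective of a ranking $\sigma$ is $\sum_c f_c(\sigma(c))$. Once the top set $S$ is fixed, the best ranking with that top set is obtained by two independent min-cost bipartite matchings: $S$ to positions $\{1,\dots,k\}$ with cost $A(S)$, and $\bar S=V\setminus S$ to positions $\{k+1,\dots,d\}$ with cost $B(\bar S)$. Hence $\opt=\min_{S \text{ fair}} A(S)+B(\bar S)$. By the NP-hardness of fair matching (Appendix~\ref{sec:hardness}) we cannot optimize this exactly, so I would replace $B$ by a separable surrogate. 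The resulting surrogate problem is then exactly of the form handled by the totally unimodular LP of the optimal $k$-fair top-$k$ algorithm (Section~\ref{sec:topK}).

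\textbf{Step 1 (surrogate lower bound).} For $c\notin S$, any position $j\ge k+1$ satisfies $f_c(j)\ge g_c:=\sum_i (k+1-\pi_i(c))^+$. This is the cost incurred merely by pushing $c$ below the boundary, and it is the generalized-footrule charge for a candidate outside the top-$k$ list. So $B(\bar S)\ge \sum_{c\notin S} g_c$, and therefore
\[
\opt \;\ge\; \Phi^* := \min_{S\text{ fair}} \Big( A(S) + \sum_{c\notin S} g_c \Big).
\]
The quantity $\Phi^*$ is the optimal aggregate $k$-fair top-$k$ objective. I would check that it coincides with the LP of Section~\ref{sec:topK}: variables $x_{c,j}$ for $j\le k$, the term $g_c(1-\sum_{j\le k}x_{c,j})$ in the objective, and group constraints on $\sum_{c\in G}\sum_{j\le k}x_{c,j}$. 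Then $\Phi^*$ and a minimizer $\hat S$ with its top matching are computable in polynomial time.

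\textbf{Step 2 (completion).} Output $\hat\sigma$, which uses the top-$k$ matching for $\hat S$ and an optimal matching of $\bar{\hat S}$ to the bottom positions. This ranking is fair because fairness depends only on $\hat S$. Its cost is $A(\hat S)+B(\bar{\hat S})\le \Phi^* + \big(B(\bar{\hat S})-\sum_{c\notin\hat S} g_c\big)$.

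\textbf{Step 3 (main obstacle: bounding the bottom excess by $\opt$).} It remains to show $B(\bar{\hat S})-\sum_{c\notin \hat S}g_c\le \opt$. The excess of $c$ at bottom position $j$ is $\sum_i \big(|\pi_i(c)-j| - (k+1-\pi_i(c))^+\big)$, which for each $i$ is at most $|\max(\pi_i(c),k+1)-j|$. So the excess is the footrule cost of placing $\bar{\hat S}$ in the bottom against the \emph{truncated} inputs $\tilde\pi_i=\max(\pi_i,k+1)$. I would first try an explicit assignment of $\bar{\hat S}$ to the bottom positions that mimics the optimal fair ranking $\sigma^*$. Candidates in $\bar{\hat S}\cap \bar S^*$ keep their positions in $\sigma^*$. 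Candidates in $\bar{\hat S}\cap S^*$, which are equinumerous with those in $\hat S\cap \bar S^*$, take over the vacated bottom slots. I would bound the extra displacement of these swapped candidates against $\tilde\pi_i$ by the cost $\sigma^*$ already pays, using the triangle inequality in each input $i$ and the fact that $\sigma^*$ places $\bar S^*$ below $k$. The goal is total excess at most $\sum_c f_c(\sigma^*(c))=\opt$, which yields $\hat\sigma$ of cost at most $\Phi^*+\opt\le 2\,\opt$. If the swap pairing fails to give this charge directly, I would instead pair the swapped candidates via a min-cost matching between $\hat S\setminus S^*$ and $S^*\setminus\hat S$. I would then use the optimality of $\hat S$ for $\Phi$ to absorb the leftover terms.
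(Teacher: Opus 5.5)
\textbf{Gap: the inequality targeted in Step 3 is false.} Your Steps 1--2 are sound. Step 3, however, aims at $B(\bar{\hat S})-\sum_{c\notin\hat S}g_c\le\opt$, and this inequality can fail, so no pairing (including your swap) can establish it.

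\emph{Counterexample.} Take $k=1$, $n=6$ and candidates $a,b,x_1,\dots,x_{16}$, with the fairness constraint forcing $a$ or $b$ into position $1$. Let
$\pi_1=\dots=\pi_5=(b,x_1,a,x_2,\dots,x_{16})$ and $\pi_6=(a,x_1,\dots,x_{14},b,x_{15},x_{16})$.
\begin{itemize}
\item The boundary charges are $g_b=5$, $g_a=1$ and $g_{x_\ell}=0$.
\item Hence $\Phi(\{a\})=10+5=15<16=15+1=\Phi(\{b\})$, so $\hat S=\{a\}$ and $\Phi^*=15$.
\item Taking $\pi_1$ as the output shows $\opt\le 30$.
\item Let $\sigma(a)=1$ and $\sigma(b)=j$. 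Proposition~\ref{prop:equaldisplacement} gives $F(\pi_i,\sigma)\ge 2\max(j-1,2)$ for $i\le 5$ and $F(\pi_6,\sigma)\ge 2(16-j)^+$. Hence $\obj(\hat\sigma)\ge 46$.
\item This bound is attained by $(a,x_1,b,x_2,\dots,x_{16})$, which is exactly the ranking your swap assignment produces.
\end{itemize}
So the excess is $46-15=31>\opt$. The algorithm's output still lies within $2\opt$; what fails is the accounting.

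\emph{Why it fails.} The excess is not a purely bottom quantity. Set $T(S):=\sum_i\sum_{c\in S}(\pi_i(c)-k-1)^+$ and use $\sum_{c\notin S}\sigma(c)=\sum_{j>k}j$. Then every ranking $\sigma$ with top set $S$ satisfies the exact identity
\[
\sum_i\sum_{c\notin S}\bigl|\max(\pi_i(c),k+1)-\sigma(c)\bigr| \;=\; D(\sigma)+T(S),\qquad D(\sigma):=2\sum_i\sum_{c\notin S}(\pi_i(c)-\sigma(c))^+ .
\]
Your swap can pay for $D$; this is exactly the argument of Lemma~\ref{lemma:sigma-analysis}. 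The term $T(\hat S)$ is different: it is the cascade caused by pulling into the top candidates that some inputs rank below position $k+1$. It depends only on $\hat S$, and nothing on the $\sigma^*$ side of your charge accounts for it. In the example, $D=26$ and $T(\{a\})=5$.

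\textbf{Repair.} Charge $T(\hat S)$ to $\Phi^*$ instead of to $\opt$.
\begin{itemize}
\item[(i)] Every $c\in\hat S$ sits at a position $\le k$, so $T(\hat S)\le A(\hat S)\le \Phi^*$.
\item[(ii)] Measure your swap assignment with the one-sided cost $D$. A candidate of $S^*\setminus\hat S$ placed at a slot $\ge k+1$ pays at most $2\sum_i(\pi_i(c)-k-1)^+$. The optimal completion minimizes $D$, because $T(\hat S)$ is fixed once $\hat S$ is. Together these give $D(\hat\sigma)\le D(\sigma^*)+2T(S^*)$.
\item[(iii)] Apply the identity to $\sigma^*$: $\opt=\Phi(\sigma^*)+D(\sigma^*)+T(S^*)$, where $\Phi(\sigma^*):=\sum_{c\in S^*}f_c(\sigma^*(c))+\sum_{c\notin S^*}g_c\ge\Phi^*$.
\end{itemize}
Combining, $\obj(\hat\sigma)=\Phi^*+D(\hat\sigma)+T(\hat S)\le 2\Phi^*+D(\sigma^*)+2T(S^*)\le 2\opt$.

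\textbf{Comparison with the paper.} With this repair your route is genuinely different from the paper's. The paper selects the top list with the one-sided surrogate $2\sum_i\sum_{c\in L}(\pi_i(c)-\tau(c))^+$, which equals your $\Phi$ plus $T(L)$. So in the paper the cascade term already sits in the top part, and the bottom bound closes with no analogue of step (i). Your surrogate instead charges the excluded candidates, and needs (i) to absorb the cascade.

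\textbf{Minor point.} Because $g_c$ is one-sided, $\Phi^*$ is not the generalized-footrule objective of Theorem~\ref{thm:top-k-only-fair}, which charges $|\pi(c)-(k+1)|$ and is not a lower bound on $\opt$. $\Phi^*$ is still computable, since the constraint matrix, and hence its total unimodularity, is unchanged.
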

To establish the above result, we start by solving the $k$-fair top-$k$ rank aggregation problem (with a slight modification in the objective as described in Section~\ref{sec:approx-full}). Next, we extend the top-$k$ output ranking to a full ranking using a matching-based technique. The key novelty lies in our analysis, which shows that this extension yields a $2$-approximation to the fair (full) rank aggregation problem.

To validate our approach, we conduct a comprehensive empirical study comparing our algorithm to baselines on several standard datasets. The experiments demonstrate that our method yields a substantially improved objective value (the sum of distances), outperforming state-of-the-art algorithms for fair rank aggregation. It is quite intriguing that, while we prove a $2$-approximation guarantee via theoretical analysis, our algorithm consistently finds solutions that are nearly optimal in practice. Further, we design another $2$-approximation algorithm (provided in the Appendix~\ref{sec:appendix-alternative-2approx}), and when we return the best (in terms of objective value) of the outputs produced by these two $2$-approximation algorithm, we attain further improvement in the objective in our empirical study. Another interesting finding is that our algorithm is \emph{robust} to the choice of distance metric. Though as an immediate corollary of our theoretical result, we get a $4$-approximation for the fair rank aggregation under the Kendall-tau metric (for the sake of completeness, we provide the proof in Appendix~\ref{sec:appendix-4approx-kendalltau}), our empirical findings establish that our algorithm produces a fair ranking that is close (and in some cases even better) compared to the state-of-the-art fair rank aggregation algorithm with respect to the Kendall-tau metric due to~\cite{ijcai2025p38}.

\noindent \textbf{Related Works.}
The rank aggregation problem, with and without fairness constraint, has been extensively studied~\cite{kemeny1959,young1988,young1978,dwork2001,ailon2008} with respect to different other distance metrics, such as Kendall-tau and Ulam. In the unconstrained setting, we know of a $(1+\epsilon)$-approximation (for any $\epsilon >0$) algorithm for the Kendall-tau~\cite{kenyon2007rank} and a $1.999$-approximation for Ulam~\cite{chakraborty2021approximating,chakraborty2023clustering}. On the other hand, for the fairness-constrained variant, a recent result~\cite{ijcai2025p38} shows a $2$-approximation for Kendall-tau and slightly better than a $3$-approximation for Ulam for a constant number of groups~\cite{chakraborty2022}. Apparently, there is no close relationship between the Spearman footrule distance and the Ulam distance. However, it is well-known~\cite{diaconis1977spearman} that the Spearman footrule distance between any two rankings is at least their Kendall-tau distance and at most twice their Kendall-tau distance. Consequently, the current best $2$-approximation for fair rank aggregation under Kendall-tau~\cite{ijcai2025p38} implies a $4$-approximation for the problem under Spearman footrule, a result that is weaker than the state-of-the-art $3$-approximation~\cite{chakraborty2022}.

It is worth noting that apart from proportional fairness, other notions of fairness have also been explored in the literature. For instance, the notion of top-$k$ statistical parity and pairwise statistical parity has been considered in~\cite{kuhlman2020rank}. However, such a statistical notion of fairness can be quite restrictive and fails to satisfy the stronger guarantees of proportional fairness, as detailed in~\cite{wei2022rank} with a concrete example. %Proportional fairness has also been explored in several related problems, which we mention in the Appendix.

In addition to rank aggregation, fairness has been incorporated into other ranking problems. \cite{celis2018ranking} considered the problem of computing a nearest/closest proportional fair ranking to a given ranking across various metrics, including Spearman footrule, Bradley-Terry, and DCG. The concept of \emph{robust fairness} has also been studied, where a close, fairer ranking is sought even without access to protected attributes~\cite{kliachkin2024fairness}. It is crucial to note, however, that such algorithms are designed to adjust a single input and may not be effective for creating a good aggregate ranking -- by first running any unconstrained rank aggregation algorithm and then making that output fair only leads to a $3$-approximation~\cite{chakraborty2022}.

The rank aggregation problem can be interpreted as the median problem -- namely, the $1$-clustering problem -- defined over a metric space of rankings. There is an expanding literature on clustering under proportional fairness, including work on $k$-clustering~\cite{chierichetti2017fair, HuangJV19}, proportional clustering~\cite{ChenFLM19}, fair representational clustering~\cite{bera2019fair, bercea2019cost}, pairwise fair clustering~\cite{bandyapadhyay2024coresets, bandyapadhyay2025constant}, correlation clustering~\cite{pmlr-v108-ahmadian20a, ahmadi2020fair, ahmadian2023improved, abs-2511-11539}, and consensus clustering~\cite{ChakrabortyCDNN25, abs-2511-11539, chakraborty2026generic}, among others. However, it should be noted that the notion of fairness in general clustering differs from that in rank aggregation.

\section{Preliminaries}
\label{sec:prelims}
\noindent \textbf{Notations. }For any $n \in \mathbb{N}$, let $[n]$ denote the set $\{1, 2, \cdots, n\}$. We refer to the set of all rankings (or permutations) over $[d]$ by $\mathcal{S}_d$. For an element $a \in [d]$, we denote its rank in $\pi$ by $\pi(a)$. For a ranking $\pi$, we say that it ranks $a$ ahead of $b$ if $\pi(a) < \pi(b)$. For a two-dimensional matrix $A$ and integers $i,j$, we use $A_{ij}$ to denote the $(i,j)$-th entry of $A$. We use $\mathbbm{1}_{E}$ to denote an indicator function for an event $E$.

In this paper, we use the following definition of \emph{fair ranking} from~\cite{chakraborty2022}.\footnote{Note, a similar definition was considered in~\cite{wei2022rank}, however it was slightly restrictive.}
\begin{definition}[Fair Ranking]
    \label{def:fair-ranking}
    Consider a partitioning of $d$ candidates into $g$ groups $G_1,\cdots, G_g$, and for each group $G_i$, $i \in [g]$, parameters $\alpha_i,\beta_i \in [0,1]$. For $\bar{\alpha} = \left( \alpha_1,\cdots,\alpha_g \right)$, $\bar{\beta} = \left( \beta_1,\cdots,\beta_g \right)$, and $k \in [d]$, a ranking $\pi$ on (a subset of) $d$ candidates is called \emph{$(\bar{\alpha},\bar{\beta})$-$k$-fair} if for each group $G_i$: 
   \begin{itemize}
       \item \textit{Minority Protection}: The top-$k$ positions of $\pi$ contain at least $\lfloor \alpha_i \cdot k \rfloor$ candidates from $G_i$, and 
       \item \textit{Restricted Dominance}: The top-$k$ positions of $\pi$ contain at most $\lceil \beta_i \cdot k \rceil$ candidates from $G_i$.
    \end{itemize}
\end{definition}
When clear from the context, we omit $\bar{\alpha},\bar{\beta}$, and $k$ from the notation, and simply refer to it as \emph{fair ranking}.

\noindent \textbf{Distance metric. }%In this paper, we use the Spearman footrule distance to measure the dissimilarity between any two rankings. 
For any two rankings $\pi_1, \pi_2 \in \mathcal{S}_d$, the \emph{Spearman footrule} distance between them, denoted by $F(\pi_1, \pi_2)$, is the sum of the displacement of elements between $\pi_1$ and $\pi_2$; formally $ F(\pi_1, \pi_2) := \sum_{i \in [d]} |\pi_1(i) - \pi_2(i)| .$

Instead of a full ranking, we are often interested in a \emph{top-$k$ list} or \emph{top-$k$ ranking}, which ranks only a subset of $k$ (top-most) candidates, and ignores the rest. Formally, a \emph{top-$k$ ranking} $\tau$ is a bijection from a domain $D_{\tau} \subseteq [d]$ to $[k]$, where $D_{\tau}$ denotes the set of candidates ranked by $\tau$. We use the following generalization of Spearman footrule distance from~\cite{fagin2003comparing} to handle top-$k$ rankings.

For a (full) ranking $\sigma \in \mathcal{S}_d$ and a top-$k$ ranking $\tau$, their Spearman footrule distance is defined as 
    \[
    F(\sigma, \tau) := \sum_{i \in D_{\tau}} |\sigma(i) - \tau(i)| + \sum_{i \in [d]\setminus D_{\tau}} |\sigma(i) - (k+1)| .
    \]

    We abuse the notation to use $F(\cdot,\cdot)$ to denote Spearman footrule distance between two full rankings as well as between one full ranking and another top-$k$ ranking.

\noindent \textbf{Fair Rank Aggregation.}

%Now, we can formally define the fair rank aggregation problem.

\begin{definition}[Fair Rank Aggregation]
\label{def:fair-rank-aggregation}
    Given a set $S \subseteq S_d$ of rankings over $d$ candidates that are partitioned into $g$ groups $G_1,\cdots, G_g$, $\bar{\alpha} = \left( \alpha_1,\cdots,\alpha_g \right) \in [0,1]^g$, $\bar{\beta} = \left( \beta_1,\cdots,\beta_g \right) \in [0,1]^g$, and $k \in [d]$, the \emph{fair rank aggregation} problem seeks to find a $(\bar{\alpha},\bar{ \beta})$-$k$-fair ranking $\sigma \in S_d$ that minimizes the objective function $\obj(S, \sigma) := \sum_{\pi \in S} F(\pi, \sigma)$.
\end{definition}

We also consider a variant where, instead of requiring a full ranking over all $d$ candidates, only a top-$k'$ ranking/list is desired. Our definition of fair ranking (Definition~\ref{def:fair-ranking}) naturally extends to a top-$k'$ ranking (for $k' \ge k$) -- a $(\bar{\alpha}, \bar{\beta})$-$k$-fair top-$k'$ ranking $\sigma$ satisfies the fairness constraints (minority protection and restricted dominance) in its top-$k$ ranked candidates.

\begin{definition}[$k$-Fair Top-$k'$ Rank Aggregation]\label{def:general-problem}
    Given a set $S \subseteq S_d$ of rankings over $d$ candidates that are partitioned into $g$ groups $G_1,\cdots, G_g$, $\bar{\alpha} = \left( \alpha_1,\cdots,\alpha_g \right) \in [0,1]^g$, $\bar{\beta} = \left( \beta_1,\cdots,\beta_g \right) \in [0,1]^g$, and $k \in [d]$, the \emph{$k$-fair top-$k'$ rank aggregation} problem asks to find a $(\bar{\alpha},\bar{ \beta})$-$k$-fair top-$k'$ ranking $\tau$ that minimizes $\obj(S, \tau) := \sum_{\pi \in S} F(\pi, \tau)$.
\end{definition}

In this paper, we provide an exact algorithm for the special case of $k$-fair top-$k$ rank aggregation, i.e., when the value of $k$ is identical for the fairness constraint and ranking output (more specifically, $k'=k$). For brevity, we refer to this as \emph{fair top-$k$ rank aggregation}. We also study the special case of $k' = d$ that corresponds to requiring a full ranking as output, and is the fair rank aggregation problem as defined in Definition~\ref{def:fair-rank-aggregation}.

For brevity, when $S$ is clear from the context, we omit it from the notation of the objective function and simply use $\obj(\tau)$ to denote $\obj(S, \tau)$.

\section{Fair Top-$k$ Rank Aggregation}
\label{sec:topK}
In this section, we show one of our main results: the problem of $k$-fair top-$k$ rank aggregation can be solved exactly in polynomial time.

\begin{theorem}
\label{thm:top-k-only-fair}
    There exists a polynomial-time algorithm that, given a set $S \subseteq S_d$ of rankings over $d$ candidates that are partitioned into $g$ groups $G_1,\cdots, G_g$, $\bar{\alpha}, \bar{\beta} \in [0,1]^g$, and $k \in [d]$, finds an optimal $(\bar{\alpha},\bar{ \beta})$-$k$-fair top-$k$ aggregate ranking.
\end{theorem}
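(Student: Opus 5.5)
We will model the problem as an assignment problem with side constraints and show that its LP relaxation already has an integral optimum.

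\textbf{The ILP.} Introduce variables $x_{ij}\in\{0,1\}$ for $i\in[d]$ and $j\in[k+1]$. Here $x_{ij}=1$ for $j\le k$ means candidate $i$ is placed at position $j$ of $\tau$, and $x_{i,k+1}=1$ means $i\notin D_\tau$. By the generalized footrule, the cost of each choice depends only on the pair $(i,j)$. We set $c_{ij}=\sum_{\pi\in S}|\pi(i)-j|$ for all $j\in[k+1]$; the column $j=k+1$ then exactly matches the term $|\pi(i)-(k+1)|$ for unranked candidates. The objective is $\sum_{i,j} c_{ij}x_{ij}$, which is linear. The constraints are:
\begin{itemize}
\item each candidate is used once: $\sum_{j\in[k+1]}x_{ij}=1$ for every $i$;
\item each top position is filled once: $\sum_{i}x_{ij}=1$ for every $j\le k$;
\item fairness, for every group $\ell$: $\lfloor\alpha_\ell k\rfloor\le\sum_{i\in G_\ell}\sum_{j\le k}x_{ij}\le\lceil\beta_\ell k\rceil$.
\end{itemize}
Integral feasible solutions correspond bijectively to fair top-$k$ rankings, with equal objective values. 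The column $k+1$ needs no constraint of its own, since it is forced to have sum $d-k$.

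\textbf{Reducing the constraints.} Using $x_{i,k+1}=1-\sum_{j\le k}x_{ij}$, we eliminate the last column. This leaves variables $x_{ij}$ for $j\le k$ with constraints $\sum_{j\le k}x_{ij}\le 1$ (rows), $\sum_i x_{ij}=1$ (columns), and the group bounds on $\sum_{i\in G_\ell}\sum_{j\le k}x_{ij}$. The objective changes only by the constant $\sum_i c_{i,k+1}$ plus a linear correction. The right-hand sides are integers.

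\textbf{Total unimodularity.} We show the reduced constraint matrix $A$ is TU via Ghouila-Houri's criterion: every subset $R$ of rows can be split into $R^+$ and $R^-$ so that every column's signed sum lies in $\{-1,0,1\}$. A variable $x_{ij}$ appears in exactly three kinds of rows: candidate row $i$, position column $j$, and the group row of the group $\ell(i)$ containing $i$. Each candidate lies in exactly one group. The group rows therefore form an aggregation of the candidate rows, giving a laminar family on the candidate side; the position rows form a partition on the other side.

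Given $R$, we put all chosen position rows in $R^-$. For each chosen group row of $G_\ell$, we put it in $R^+$. For a chosen candidate row $i$, we put it in $R^-$ if its group row is also chosen, and in $R^+$ otherwise. Check a variable $x_{ij}$: its candidate side contributes $+1$ if exactly one of row $i$ or the group row of $\ell(i)$ is chosen, $0$ if both are chosen ($+1-1$), and $0$ if neither is. The position side contributes $-1$ or $0$. Hence every column sum lies in $\{-1,0,1\}$, and $A$ is TU.

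Appending identity rows for the bounds $0\le x\le 1$ and duplicating rows with sign changes for two-sided inequalities preserves TU.

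\textbf{Conclusion.} The polytope is integral, so solving the LP in polynomial time yields an integral optimal vertex (via a vertex-finding method or the ellipsoid method). That vertex is an optimal fair top-$k$ ranking. If the LP is infeasible, no fair ranking exists.

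The main obstacle is the TU step, specifically handling the interplay of the group rows with the candidate rows. The elimination of the $(k+1)$ column is what makes the structure a laminar-plus-partition family and keeps the argument clean.
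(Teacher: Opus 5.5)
Your proposal is correct and matches the paper's proof in every essential step. You use the same LP (your elimination of the $(k+1)$ column gives exactly \ref{LP:topkonly}) and the same Ghouila-Houri partition: position rows on one side, group rows on the other, and each candidate row placed according to whether its group's row is chosen. Both proofs then conclude via integrality and the ellipsoid method. The only difference is cosmetic: you handle the two-sided group bounds and the box constraints $0\le x\le 1$ through standard closure properties of total unimodularity (sign-duplicated rows, appended identity rows), whereas the paper folds them into the partition argument itself.
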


\noindent \textbf{ILP formulation.}
To show the above result, we start by formulating the $k$-fair top-$k$ rank aggregation problem using an integer linear program (ILP). For that purpose, we first define the following weight function: For each $i \in [d]$ and positive integer $j \le (k+1)$, the weight $w_{ij}$ is equal to the distance incurred by placing element $i$ at position $j$ in the output ranking, more specifically, $w_{ij} = \sum_{\pi \in S} | \pi(i) - j | $. Next, we consider the variable $x_{ij}$ for each $i\in [d]$ and positive integer $j \le k$, where $x_{ij}$ takes on 1 if element $i$ is placed at position $j$; and 0 otherwise.

It is now easy to observe that the following ILP formulates the top-$k$ fair rank aggregation problem.
\begin{equation*}
\label{LP:topkonly}
\tag{Fair-K-LP}
\begin{array}{lllr}
& \text{minimize} \ \sum_{i \in [d]} \sum_{j \le k} w_{ij}x_{ij} \\&+  \sum_{i \in [d]} (1 - \sum_{j \le k} x_{ij}) w_{i (k+1)}\\
& \text{subject to} \\
                  & \sum_{j \le k} x_{ij} \le 1 & \forall i \in [d] & (1) \\
                  & \sum_{i \in [d]} x_{ij} = 1 & \forall j \le k & (2)\\
                  &  \sum_{i \in G_a} \sum_{j \le k} x_{ij} \ge \lfloor \alpha_a \cdot k \rfloor & \forall G_a \in \mathcal{G} & (3)\\
                  & \sum_{i \in G_a} \sum_{j \le k} x_{ij} \le \lceil \beta_a \cdot k \rceil & \forall G_a \in \mathcal{G} & (4)\\
                  & x_{ij} \in \{0, 1\} & \forall i \in [d], j \le k & (5)
\end{array}
\end{equation*}

Next, we relax the integrality constraint on $x_{ij}$ to replace the $0-1$ constraint with $0 \le x_{ij} \le 1$ to get the corresponding linear program (LP). %After solving the relaxed LP, the resulting solution is then rounded into an integral solution.

We know that if an LP satisfies certain properties, the polyhedron defined by the constraints is \emph{integral} -- the LP will be integral, i.e., it will have at least an integral optimal solution (if an optimum exists). In that case, it suffices to solve the LP to derive an integral optimal solution. One such condition is that when expressed in \emph{standard form}, $A\bar{x} = \bar{b}, \bar{x} \ge 0$ for a matrix $A$ and vectors $\bar{x}, \bar{b}$, the matrix $A$ is \emph{totally unimodular} (see Theorem~\ref{thm:integral-lp-sol}). We also refer to any standard textbook on LP (e.g., ~\cite{schrijver1998theory, wolsey1999integer}) for well-known notions such as standard form, total unimodularity, and integrality of an LP.

\begin{theorem}[\cite{wolsey1999integer}, Part III.I Proposition 1.3, 2.3]
\label{thm:integral-lp-sol}
    Let $A$ be a totally unimodular matrix, and $\bar{b}$ be an integral vector. Then the polyhedron $P(\bar{b}) = \{\bar{x} \in \mathbb{R}^n_+ : A\bar{x} = \bar{b}\}$ is integral. Furthermore, the linear program $\{\min \ \bar{c}^T\bar{x} : \bar{x} \in P(\bar{b}) \}$ has an integral optimal solution for all $\bar{c} \in \mathbb{Z}^n$ for which an optimal solution exists.
\end{theorem}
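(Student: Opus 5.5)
The statement immediately above is Theorem~\ref{thm:integral-lp-sol}, a cited textbook fact. What really needs an argument is its use for Theorem~\ref{thm:top-k-only-fair}, so I sketch both: the classical proof of the cited result, then the reduction that applies it.

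\textbf{The cited theorem.} Let $\bar{x}$ be a vertex of $P(\bar{b})$ and let $B$ be its support. Since $\bar{x}$ is a vertex, the columns $A_B$ are linearly independent. Extend them to a nonsingular square submatrix $A'$ of $A$ by choosing a suitable set of rows (after discarding redundant rows of $A$ if needed). Then $\bar{x}_B = (A')^{-1}\bar{b}'$. By Cramer's rule each entry of $(A')^{-1}$ equals $\pm\det(\cdot)/\det(A')$. Total unimodularity gives $\det(A') = \pm 1$ and all minors in $\{0,\pm 1\}$, so $\bar{x}$ is integral. Hence $P(\bar{b})$ is integral. Moreover, $P(\bar{b}) \subseteq \mathbb{R}^n_+$ is pointed, so whenever an optimum exists it is attained at a vertex, which is integral.

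\textbf{Applying it to Theorem~\ref{thm:top-k-only-fair}.} First rewrite the objective as $\sum_i w_{i(k+1)} + \sum_{i,j\le k}(w_{ij}-w_{i(k+1)})x_{ij}$. The first term is a constant and the coefficients are integral. Next, eliminate the objective's dependence on constraint (1) by introducing a "slot" variable $y_i = 1-\sum_j x_{ij}$, or keep (1) as a $\le$ constraint with a slack.

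The key step, and the expected obstacle, is proving that the constraint matrix of (1)--(4), with slacks appended, is totally unimodular. Appending an identity block for the slack variables preserves total unimodularity, so it suffices to treat the inequality system. I would use Ghouila-Houri's characterization: every subset of rows can be split into two classes so that, in every column, the signed sum lies in $\{0,\pm1\}$.

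Each column $x_{ij}$ with $i\in G_a$ has a $1$ in exactly four rows: row (1) for $i$, row (2) for $j$, row (3) for $G_a$, and row (4) for $G_a$. Rows (3) and (4) have identical support. These supports form two laminar families:
\begin{itemize}
\item the element rows (1) refine the group rows (3)/(4);
\item the position rows (2) form a partition of the columns.
\end{itemize}
A matrix whose rows are the union of two laminar families is totally unimodular, since it is a network matrix. One can also check this directly via Ghouila-Houri: assign each position row $+1$. Process the laminar family of element and group rows by depth, alternating signs, so that each column's sum telescopes to $0$ or $\pm1$. When both copies (3) and (4) of $G_a$ are selected, give them opposite signs so they cancel.

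Finally, solve the LP in polynomial time, for instance with the ellipsoid method or interior-point methods followed by a move to a vertex. The resulting vertex is an integral optimal solution. The integrality of the right-hand sides $\lfloor\alpha_a k\rfloor$, $\lceil\beta_a k\rceil$, $1$ is what allows Theorem~\ref{thm:integral-lp-sol} to apply.
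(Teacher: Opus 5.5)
Your proof is correct. The paper gives no proof of this statement and only cites the textbook~\cite{wolsey1999integer}, whose argument is the one you give: a vertex has linearly independent support columns, a nonsingular square submatrix of a totally unimodular matrix has determinant $\pm 1$ and hence an integral inverse by Cramer's rule, and a pointed polyhedron attains any finite optimum at a vertex. Your extra application sketch is also consistent with the paper, since your signing (position rows on one side, the group row on the other, element rows placed by depth) is the same partition the paper uses in Lemma~\ref{lemma:constraint-totally-unimod}.
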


Now, we prove that when expressed in linear form, the constraints of~\ref{LP:topkonly} form a totally unimodular matrix. The following theorem is a known characterization of totally unimodular matrices, which we use.

\begin{theorem}[\cite{wolsey1999integer}, Part III.I Theorem 2.7]
\label{thm:totally-unimod-characterization}
    A matrix $A$ is totally unimodular if and only if the following holds. For every subset $R$ of the rows of $A$, there is a partitioning of $R$ into $R_1$ and $R_2$ such that for every column $c$ of the matrix $A$, $\left| \sum_{r \in R_1} A_{rc} - \sum_{s \in R_2} A_{sc} \right| \le 1 .$
\end{theorem}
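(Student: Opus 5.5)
We plan to prove the two directions separately, following the classical argument of Ghouila-Houri. Throughout, for a subset $R$ of rows we encode a partition $R = R_1 \cup R_2$ by a sign vector $y \in \{0,\pm 1\}^{\text{rows}}$ with $y_r = 1$ for $r \in R_1$, $y_r = -1$ for $r \in R_2$, and $y_r = 0$ otherwise. The condition of the theorem then says exactly that for every $R$ there is such a $y$ supported on $R$ with $y^T A \in \{0,\pm1\}^{n}$ entrywise.

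\textbf{($\Rightarrow$)} Suppose $A$ is totally unimodular and fix $R$. Let $A_R$ be the submatrix of rows in $R$, and let $s = \mathbf{1}^T A_R$ be its vector of column sums. Consider the polytope
\[
Q = \bigl\{ u \in \mathbb{R}^R : 0 \le u \le \mathbf{1},\ \lfloor s/2 \rfloor \le u^T A_R \le \lceil s/2 \rceil \bigr\}.
\]
It contains $u = \tfrac12 \mathbf{1}$, so it is nonempty. Its constraint matrix is obtained by stacking $A_R^T$, $-A_R^T$, $I$ and $-I$. This stacked matrix is totally unimodular because transposition, negation and appending identity rows all preserve total unimodularity. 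All right-hand sides are integral, so $Q$ has an integral vertex $u \in \{0,1\}^R$. Set $R_1 = \{r : u_r = 1\}$ and $R_2 = R \setminus R_1$. Then for each column $c$,
\[
\sum_{r\in R_1} A_{rc} - \sum_{r \in R_2} A_{rc} = 2(u^T A_R)_c - s_c \in \{-1,0,1\},
\]
since $(u^TA_R)_c \in \{\lfloor s_c/2\rfloor, \lceil s_c/2\rceil\}$.

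\textbf{($\Leftarrow$)} We show by induction on $m$ that every $m\times m$ submatrix has determinant in $\{0,\pm1\}$. For $m=1$, apply the condition to a single row $R=\{r\}$; this gives $|A_{rc}| \le 1$ for every column $c$. For the inductive step, let $B$ be a nonsingular $m \times m$ submatrix. Every proper square submatrix of $B$ has determinant in $\{0,\pm1\}$ by induction, so the adjugate of $B$ has entries in $\{0,\pm1\}$.

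Let $z^T = \det(B)\, e_1^T B^{-1}$, which is the first row of the adjugate. Then $z \in \{0,\pm1\}^m$, $z \neq 0$, and $z^T B = \det(B)\, e_1^T$. Take $R$ to be the rows of $B$ indexed by the support of $z$. The condition applied to $A$ also holds for the columns of $B$, so it gives a sign vector $y$ supported exactly on $\operatorname{supp}(z)$ with $y^T B \in \{0,\pm1\}^m$. Since $y \equiv z \pmod 2$ entrywise, we get $y^T B \equiv \det(B)\, e_1^T \pmod 2$. Hence every coordinate $j \neq 1$ of $y^TB$ is even, and because it lies in $\{0,\pm1\}$ it must be $0$. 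Moreover $y^TB \neq 0$, because $B$ is nonsingular and $y \ne 0$. Therefore $y^T B = \pm e_1^T$, which means $y = \pm z/\det(B)$. Both $y$ and $z$ have entries $\pm1$ on the same nonempty support, so $|\det B| = 1$.

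\textbf{Main obstacle.} The delicate step is the parity argument in the reverse direction. One must make sure the partition is taken on exactly the support of the adjugate row $z$, so that $y \equiv z \pmod 2$ holds. One must also rule out $y^TB = 0$, which is where nonsingularity of $B$ is used.
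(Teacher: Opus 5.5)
The paper does not prove this statement. It is Ghouila-Houri's characterization, imported from Nemhauser--Wolsey and used only as a black box in Lemma~\ref{lemma:constraint-totally-unimod}, so there is no in-paper argument to compare against.

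Your proof is the standard textbook proof (e.g.\ as in Schrijver's \emph{Theory of Linear and Integer Programming}).
\begin{itemize}
\item For necessity, you use the polytope $\{0\le u\le \mathbf{1},\ \lfloor s/2\rfloor\le u^TA_R\le\lceil s/2\rceil\}$. It contains $\tfrac12\mathbf{1}$, and its integral vertex gives the split.
\item For sufficiency, you induct on the order of the minor. You take $z$ as the first row of the adjugate and $y$ as a split on exactly the support of $z$. The congruence $y^TB\equiv z^TB=\det(B)\,e_1^T \pmod 2$ then forces $y^TB=\pm e_1^T$, hence $|\det B|=1$.
\end{itemize}
Both directions are sound. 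One wording point: what preserves total unimodularity is appending negated copies of existing rows, since any square submatrix containing a row and its negative is singular. Negating the whole matrix is not what you use.

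The one thing you must make explicit is that $A$ is integral. Your reverse direction uses this silently in three places:
\begin{itemize}
\item The base case only yields $|A_{rc}|\le 1$, which gives $A_{rc}\in\{0,\pm1\}$ only for integer entries.
\item Your reformulation that the condition ``says exactly'' $y^TB\in\{0,\pm1\}^m$ needs $y^TB$ to be integral.
\item The congruence $y^TB\equiv z^TB \pmod 2$ needs $B$ to be integral.
\end{itemize}
This is a defect of the statement as transcribed, not of your argument. Without integrality the statement is false: the $1\times 1$ matrix $A=(1/2)$ satisfies the partition condition for every $R$ but has determinant $1/2$. The usual formulations either assume $A$ is integral or require the signed row sum to lie in $\{0,\pm1\}^n$; applied to singletons, the latter forces integrality.

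In the paper's application the standard-form constraint matrix has entries in $\{0,\pm1\}$, so adding ``$A$ integral'' as a hypothesis costs nothing. With that hypothesis, your proof is complete. For necessity nothing extra is needed, since a totally unimodular matrix already has entries in $\{0,\pm1\}$ and so $s$ is integral.
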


Before we prove that the matrix for~\ref{LP:topkonly} is totally unimodular, we can make some observations about its standard form. Each variable $x_{ij}$ only appears in exactly one constraint of (1), (2), (3), (4), and (5) each. Therefore, ignoring slack variables, any column of $A$ has exactly five non-zero entries.

\begin{lemma}
\label{lemma:constraint-totally-unimod}
    Let the constraints of the linear program~\ref{LP:topkonly} be expressed in standard form as $A\bar{x} = \bar{b}$. Then the matrix $A$ is totally unimodular.
\end{lemma}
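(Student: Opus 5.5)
We apply the row-partition characterization of Theorem~\ref{thm:totally-unimod-characterization} to the constraint matrix, with columns indexed by the variables $x_{ij}$. First we reduce to the core matrix $M$ whose rows are the constraint families (1)--(4). Adding slack variables for the inequalities (1), (3), (4) and for the upper bounds $x_{ij}\le 1$ appends identity columns. Writing (4) as an equality with a slack and (3) with a surplus gives a column $-e_r$, which can be negated without affecting total unimodularity. The bound rows $x_{ij}+s_{ij}=1$ append rows of the form $[\,e_c^T \mid e^T\,]$. It is standard that appending unit rows and unit columns (or identity blocks) to a matrix preserves total unimodularity: expand any square subdeterminant along the unit row or column. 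It therefore suffices to show that $M$ is totally unimodular. The objective only contributes constants $w_{i(k+1)}$ and coefficients $w_{ij}-w_{i(k+1)}$, which do not affect $A$.

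Next we simplify $M$. The column of $x_{ij}$ has a $1$ in row $(1)_i$, in row $(2)_j$, in row $(3)_{a}$, and in row $(4)_{a}$, where $G_a\ni i$. Rows $(3)_a$ and $(4)_a$ are identical. Duplicating a row preserves total unimodularity: a square submatrix containing both copies has determinant $0$, and otherwise it is a submatrix of the matrix with one copy. Hence it is enough to treat the matrix $M'$ with rows $\{(1)_i\}_{i\in[d]}$, $\{(2)_j\}_{j\le k}$, and $\{(3)_a\}_{a\in[g]}$. Each column of $M'$ has exactly three ones, one in each family.

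The key structural observation is a laminar structure. The family-(1) rows are indexed by elements $i$, and the group rows are indexed by the groups $G_a$, which partition $[d]$. So every element row is \emph{nested} inside exactly one group row: the support of $(1)_i$ is contained in the support of $(3)_{a(i)}$. Together with the position rows, this matrix is the incidence matrix of a bipartite-like network. We expect it to be a network matrix, with the three families forming two laminar families, one on each side.

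We verify the partition condition directly. Take any subset $R=R_E\cup R_P\cup R_G$ of the rows, consisting of element, position, and group rows. Put every position row in $R_2$ and every group row in $R_1$. For each element row $(1)_i\in R_E$, put it in $R_2$ if $(3)_{a(i)}\in R_G$, and in $R_1$ otherwise. Now fix a column $x_{ij}$ and let $\epsilon_P=\mathbbm{1}[(2)_j\in R]$. If the group row of $i$ is chosen, the column sum is $1-\epsilon_P-\mathbbm{1}[(1)_i\in R]\in\{-1,0,1\}$. If the group row is not chosen, the sum is $\mathbbm{1}[(1)_i\in R]-\epsilon_P\in\{-1,0,1\}$. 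Both cases satisfy the bound of Theorem~\ref{thm:totally-unimod-characterization}, so $M'$, and hence $A$, is totally unimodular.

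The main obstacle is making the reduction steps rigorous: handling the slack and bound columns, the sign flips, and the duplicated rows $(3)$ and $(4)$ within the standard-form matrix. We would do this by applying the characterization to the full $A$ with the same partition rule. Bound rows are treated as element-like rows and placed to cancel, and slack columns are automatically fine since they have a single nonzero entry.
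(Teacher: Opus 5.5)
Your proof is correct and is essentially the paper's proof: it uses the same Ghouila-Houri partition, with position rows in $R_2$, group rows in $R_1$, and each element row placed opposite its group row when that row is chosen. The only difference is that you first remove the slack and bound columns, the sign flips, and the duplicated (3)/(4) rows using standard TU-preserving operations, whereas the paper handles these inside the partition argument (pairing (3) with (4), swapping (3) for (4), and deferring the bound rows). Your final paragraph is not needed, because your reductions already suffice provided the unit rows $e_c^T$ are appended before the slack columns $s_{ij}$, which are then unit columns.
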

\begin{proof}
    Let $R$ be any subset of the rows of $A$. Now, our goal will be to form a partitioning of $R$ into two subsets, $R_1$ and $R_2$ such that for every column $c$, $\left| \sum_{i \in R_1} A_{ic} - \sum_{i \in R_2} A_{ic} \right| \le 1$.
    
    Let us first make some observations to simplify the proof. Suppose both rows corresponding to constraints of (3) and (4) for the same group are present in $R$. Then, for any column, either of the two cases holds for these two rows: (i) both entries are zero, (ii) one entry is $-1$ and one entry is $1$. Therefore, we can easily place both rows into the same partition, and the sum over these two rows for all columns is 0. Therefore, without loss of generality, we will assume that at most one of the two fairness constraints, i.e., (3) and (4), for each group is present in $R$.

    Suppose only the row corresponding to the constraint of (3) on a group is present in $R$. Observe that it has an entry of $-1$ in every column where the constraint of (4) has a $1$ in. If we place (3) into a partition, say $R_1$, then the impact on $\left| \sum_{i \in R_1} A_{ic} - \sum_{i \in R_2} A_{ic} \right|$ is equivalent for all columns $c$ as placing the row (4) for this group in $R_2$. So without loss of generality, we will assume that no rows corresponding to (3) are present in $R$. If any such row is present, then it is equivalent to replacing it by the row corresponding to (4).

    Further, any row corresponding to a constraint of (5) has an entry of 1 only in a single column, say $c$. For all such rows in $R$, we can `defer' their placement into the partitioning until last, where they can easily be placed in the partition for which $c$ has a smaller sum. Therefore, we will also assume any rows corresponding to a constraint of (5) do not appear in $R$.
    
    Now we describe the partitioning $R_1$ and $R_2$. 
    For all rows in $R$ corresponding to a constraint of (4), place them into $R_1$. For all rows in $R$ corresponding to a constraint of (2), place them into $R_2$. Now, consider rows in $R$ corresponding to a constraint of (1). For some such row $r$, the constraint corresponding to this row must be for some $i \in [d]$. Let $G_a$ be the group that $i$ belongs to. If the row corresponding to the fairness constraint for $G_a$ is in $R$, then place $r$ into $R_2$. Otherwise, place it into $R_1$. 

    Recall that our goal is for any column $c$, $\left| \sum_{i \in R_1} A_{ic} - \sum_{i \in R_2} A_{ic} \right| \le 1$. We now prove this holds for the given partitioning. This column $c$ contains the coefficients of some variable, say $x_{ij}$. Therefore, we know that there must exist three rows of $A$ which contain a 1 in this column, corresponding to a constraint, each of type (1), (2), and (4). Now, let us consider the partitioning scheme under all possible subsets of these four rows being in $R$. 

    First, if at most one of the rows is present in $R$, it is obvious that $\left| \sum_{i \in R_1} A_{ic} - \sum_{i \in R_2} A_{ic} \right| = 1$.

    Now consider the case that exactly two of the rows corresponding to constraints of (1), (2), or (4) are present in $R$. There are three such possibilities. Under all three such possibilities, it is easy to verify that our described partitioning indeed guarantees that one such row is placed in $R_1$, and another row is placed in $R_2$.

    Finally, consider the case that all three of the rows corresponding to constraints of (1), (2), or (4) are present in $R$. Then by our partitioning scheme, we have that $\sum_{i \in R_1} A_{ic} = 1$ as the row corresponding to (4) is placed in $R_1$, and $\sum_{i \in R_2} A_{ic} = 2$ as the row corresponding to (1) and (2) are placed in $R_2$. Therefore, the difference in the sums is 1.

    Now, we have shown that for every subset $R$ of the rows of $A$, there exists a partitioning of $R$ into $R_1$ and $R_2$ such that $\left| \sum_{i \in R_1} A_{ij} - \sum_{i \in R_2} A_{ij} \right| \le 1$ holds. By Theorem~\ref{thm:totally-unimod-characterization}, the matrix $A$ is totally unimodular.
\end{proof}
Finally, we find an optimal basic feasible solution to the linear program in polynomial time using the ellipsoid method.

\begin{theorem}[Theorem 3.8 of \cite{grotschel1981ellipsoid}]
\label{thm:ellipsoid}
    There exists an algorithm that, given a matrix $A \in \mathbb{Q}^{m\times n}$ and vectors $\bar{b} \in \mathbb{Q}^{n}, \bar{c} \in \mathbb{Q}^m$, finds $\bar{x}$ that maximizes $\bar{c}^T \bar{x} : A\bar{x} \le \bar{b}$. Further, $\bar{x}$ is a vertex of the polyhedron $P(\bar{b}) = \{\bar{x} \in \mathbb{R}^n_+ : A\bar{x} \le \bar{b}\}$. Moreover, the algorithm runs in time polynomial in $n$ and $\log L$, where $L$ is an upper bound on the absolute values of the numerators and denominators of all entries of $A, \bar{b}, \bar{c}$.
\end{theorem}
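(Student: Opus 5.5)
The plan is to follow the classical route of Gr\"otschel, Lov\'asz and Schrijver. First, reduce optimization over $\{\bar{x} \ge 0 : A\bar{x} \le \bar{b}\}$ to a sequence of feasibility questions. Second, answer each feasibility question with the central-cut ellipsoid method. Third, convert the approximately optimal point so obtained into an exact vertex, using the fact that vertices have bounded encoding length.

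The first ingredient is a bound on vertices. By Cramer's rule, every vertex of the polyhedron is a ratio of two subdeterminants of the constraint system (including the rows $\bar{x} \ge 0$). By Hadamard's inequality, each such entry has numerator and denominator bounded by $2^{\mathrm{poly}(n,\log L)}$. Consequently:
\begin{itemize}
\item if the polyhedron is nonempty, it contains a point in a ball of radius $R = 2^{\mathrm{poly}(n,\log L)}$;
\item if the LP is bounded, its optimum value $\bar{c}^T\bar{x}^*$ is a rational number with denominator at most $2^{\mathrm{poly}(n,\log L)}$;
\item unboundedness can be detected by testing feasibility of the recession cone system $\{A\bar{y} \le 0, \bar{y} \ge 0, \bar{c}^T\bar{y} \ge 1\}$.
\end{itemize}

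For feasibility I would use the standard perturbation trick: replace $\bar{b}$ by $\bar{b} + \epsilon\mathbf{1}$ with $\epsilon = 2^{-\mathrm{poly}(n,\log L)}$. If the original system is feasible, the perturbed one contains a ball of radius $2^{-\mathrm{poly}}$. If the original system is infeasible, then by Farkas' lemma (again with bounded-size certificates) the perturbed system is still infeasible. The ellipsoid method then starts from the ball of radius $R$. At each step it either finds that the centre satisfies all constraints, or it cuts along a violated row. Since each step shrinks the volume by a factor $e^{-1/(2(n+1))}$, it terminates after $O(n^2 \cdot \mathrm{poly}(\log L))$ iterations. Binary search over the objective value, adding the constraint $\bar{c}^T\bar{x} \ge t$, then identifies the optimum value $t^*$ exactly, because the candidate values are rationals with bounded denominators and finitely many binary-search steps separate them.

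The step I expect to be the main obstacle is returning an exact \emph{vertex} rather than merely some near-optimal point. My first attempt is the iterative-fixing approach. Add the constraint $\bar{c}^T\bar{x} = t^*$ to obtain the optimal face. Then go through the inequalities one at a time: for each, test feasibility of the current face with that inequality tightened to equality, and keep the equality whenever the system stays feasible. Once the equalities are maximal, the face is a single point of full-rank tight constraints, which is a vertex. It can then be computed exactly by Gaussian elimination, which is polynomial in bit-length.

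Each feasibility test on a lower-dimensional face needs care, because such a face has zero volume. I would handle this by restricting to the affine hull already fixed, parameterising it via Gaussian elimination so the problem becomes full-dimensional in fewer variables, and then perturbing as above. If this bookkeeping becomes unwieldy, the fallback is simultaneous Diophantine approximation: round the ellipsoid's approximate optimum to the nearby rational vertex, which has bounded denominator. The total work is polynomially many feasibility calls, each polynomial in $n$ and $\log L$.

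In the paper's application, combining this vertex guarantee with Lemma~\ref{lemma:constraint-totally-unimod} and Theorem~\ref{thm:integral-lp-sol} shows that the vertex returned is integral, and hence an optimal solution to the ILP.
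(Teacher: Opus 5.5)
The paper does not prove this statement; it cites Gr\"otschel--Lov\'asz--Schrijver. Your sketch therefore has to stand on its own as a reconstruction of the classical result, and as such it is sound. It follows the standard Khachiyan-style chain: Cramer/Hadamard bounds, $\epsilon$-perturbation for full-dimensionality with bounded Farkas certificates for the infeasible case, binary search for $t^*$, then extraction of a vertex.

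The vertex-extraction step is where you depart from the route the paper alludes to. In the proof of Theorem~\ref{thm:top-k-only-fair} it speaks of ``a perturbation on the objective function.'' There, $\bar{c}$ is perturbed so that the optimum is a unique vertex, and one approximate optimum is then rounded to it. You instead compute $t^*$ exactly and tighten the $m+n$ inequalities one at a time. Your route needs only feasibility tests and Gaussian elimination, at the cost of $m+n$ extra ellipsoid runs. The perturbation route needs a single run but requires a rounding lemma.

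Your tightening argument is correct. After the pass, each inequality is either tight on the whole final face or slack everywhere on it. So that face has no proper nonempty face, hence is an affine subspace, hence a point because $P$ has no lines. This uses that the rows $\bar{x}\ge 0$ are included in the pass.

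Your fallback of rounding an approximate optimum to a nearby vertex is valid only after such an objective perturbation. Without uniqueness, a near-optimal point can lie deep inside an optimal face, far from every vertex.

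There are two corrections and one omission:
\begin{itemize}
\item Your closing claim that each feasibility call is polynomial in $n$ and $\log L$ is false. Every ellipsoid step must scan all $m$ rows to find a violated one, and your fixing pass makes $m+n$ calls. What you actually prove is a bound polynomial in $m$, $n$ and $\log L$. That is the correct form of the theorem: the $m$-free bound as transcribed here cannot hold for an explicitly given $A$. Duplicate rows make $m$ unbounded in terms of $n$ and $L$, and the input must be read; only the number of ellipsoid iterations is independent of $m$. This is harmless for the paper, since there $m$ is polynomial in $d$.
\item The central-cut update involves square roots. To obtain a bit-complexity bound you must round the centre and shape matrix to polynomially many bits and slightly inflate the ellipsoid at each step, which is the standard GLS device. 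Your sketch is silent on this.
\item Recovering $t^*$ exactly from the final binary-search interval needs an explicit continued-fraction step. You should state it rather than appeal to ``finitely many'' steps.
\end{itemize}
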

\
%Now, we can prove the main result of this section.
\begin{proof}[Proof of Theorem~\ref{thm:top-k-only-fair}]
    We have from Lemma~\ref{lemma:constraint-totally-unimod} that the matrix $A$ corresponding to the linear program is totally unimodular. Further, it is obvious that the vector $\bar{b}$ of the linear program is integral. Then by the properties of totally unimodularity (Theorem~\ref{thm:integral-lp-sol}), and by the fact that a nonempty polyhedron is integral if and only if all of its extreme points are integral~\cite{wolsey1999integer}, all basic feasible solutions of~\ref{LP:topkonly} are integral. By using an ellipsoid method and using a perturbation on the objective function (as necessary by Theorem~\ref{thm:ellipsoid}), such an optimal basic feasible solution can be found in polynomial time.
\end{proof}

%Therefore, solving the relaxed linear program is sufficient to solve optimally an instance of fair top-$k$-only rank aggregation. 

\section{Approximate Fair Full Rank Aggregation}
\label{sec:approx-full}

In this section, we design an approximation algorithm for the fair (full) rank aggregation problem. Namely, we prove the following theorem.

\begin{theorem}
\label{thm:approx-full-ranking}
    There exists a polynomial-time algorithm that, given a set $S \subseteq S_d$ of rankings over $d$ candidates that are partitioned into $g$ groups $G_1,\cdots, G_g$, $\bar{\alpha}, \bar{\beta} \in [0,1]^g$, and $k \in [d]$, finds a $2$-approximate $(\bar{\alpha},\bar{ \beta})$-$k$-fair aggregate ranking on $d$ candidates.
\end{theorem}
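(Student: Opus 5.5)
The plan is a two-phase algorithm. Phase one solves a modified fair top-$k$ problem that picks the set $T$ of $k$ candidates for the top-$k$ and orders them. Phase two places the remaining $d-k$ candidates in positions $k+1,\dots,d$ by a minimum-cost bipartite matching with weights $w_{ij}=\sum_{\pi\in S}|\pi(i)-j|$. The output is fair because fairness only concerns the top-$k$, which phase one controls. The whole proof then rests on bounding the total cost by $2\,\opt$.

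\textbf{Phase one.} I would use~\ref{LP:topkonly} with a modified objective. The cost of a candidate $i$ left outside the top-$k$ becomes a lower bound on what it must pay in any full ranking. Concretely, replace $w_{i(k+1)}$ by $\ell_i:=\min_{j>k} w_{ij}$, i.e.\ the cost of the best position below $k$. This changes only the objective vector, not the constraint matrix, so Lemma~\ref{lemma:constraint-totally-unimod} and the ellipsoid argument in the proof of Theorem~\ref{thm:top-k-only-fair} still yield an integral optimum $\tau$ in polynomial time. Let $\sigma^*$ be an optimal fair full ranking. Its restriction to the top-$k$ is feasible for this LP and costs at most $\obj(\sigma^*)$, since every $i$ with $\sigma^*(i)>k$ pays $w_{i\sigma^*(i)}\ge \ell_i$. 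Hence the LP value $L$ satisfies
\[
L=\sum_{i\in T}w_{i\tau(i)}+\sum_{i\notin T}\ell_i\le \opt .
\]

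\textbf{Phase two.} The matching is at least as good as any explicit assignment of $R:=[d]\setminus T$ to positions $k+1,\dots,d$, so it suffices to exhibit one. Let $B^*$ be the bottom set of $\sigma^*$; then $|R|=|B^*|$, and $R\setminus B^*$ and $B^*\setminus R$ have equal size. The assignment is:
\begin{itemize}
\item every $i\in R\cap B^*$ keeps its position $\sigma^*(i)$;
\item the elements of $R\setminus B^*$ (top in $\sigma^*$, bottom for us) are assigned bijectively to the positions $\{\sigma^*(j): j\in B^*\setminus R\}$ vacated by elements we promoted into the top-$k$.
\end{itemize}
The first group costs at most the corresponding part of $\opt$. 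For a pair $i\mapsto p=\sigma^*(j)$, the triangle inequality through the boundary position $k+1$ and through $\pi(j)$ gives, summed over $\pi\in S$,
\[
w_{ip}\le w_{i(k+1)}+w_{j\sigma^*(j)}+w_{j(k+1)} .
\]

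\textbf{The charging step (main obstacle).} The crux is to charge each term to either $\opt$ or $L$ so that the total is at most $\opt+L\le 2\opt$.
\begin{itemize}
\item The term $w_{j\sigma^*(j)}$ is part of $\opt$.
\item The term $w_{i(k+1)}$ is paid by $\sigma^*$ up to what $i$ pays at its top position. Since $\sigma^*(i)\le k$ and every $\pi(i)$ is an integer, we have $|\pi(i)-(k+1)|\le|\pi(i)-\sigma^*(i)|$ whenever $\pi(i)\le k$, but not in general.
\item The term $w_{j(k+1)}$ should be controlled by $w_{j\tau(j)}$, which $L$ pays.
\end{itemize}
To make this work, I would first try to order the bijection by rank: match $R\setminus B^*$ to the vacated positions in increasing order of $\sigma^*$-rank, or by a monotone coupling. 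Then the two detours through position $k+1$ telescope, and the pair costs at most $w_{i\sigma^*(i)}+w_{j\tau(j)}+w_{j\sigma^*(j)}$ minus something. That leaves a slack which must be absorbed using the optimality of $\tau$ via an exchange argument: swapping $i$ and $j$ between top and bottom cannot decrease $L$. I expect that getting exactly factor $2$, not $3$, hinges on this exchange inequality, and that the choice of the outside cost $\ell_i$ versus $w_{i(k+1)}$ in phase one may need adjusting so the exchange inequality has the right form.
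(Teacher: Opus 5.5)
\textbf{Verdict: there is a genuine gap.} Phase one is fine: changing only the objective keeps the constraint matrix totally unimodular, and $L\le\opt$ holds. Reducing phase two to exhibiting a single assignment is also valid. But the proof stops exactly at the step that carries the factor $2$, and the charges you sketch for the pair bound $w_{ip}\le w_{i(k+1)}+w_{j\sigma^*(j)}+w_{j(k+1)}$ do not hold.
\begin{itemize}
\item For $i\in R\setminus B^*$ (so $\sigma^*(i)\le k$), the inequality $|\pi(i)-(k+1)|\le|\pi(i)-\sigma^*(i)|$ holds when $\pi(i)\ge k+1$ and fails when $\pi(i)\le k$. You have the condition reversed: $\pi(i)=\sigma^*(i)=1$ gives $k$ versus $0$. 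The failing case is the natural one for an element that $\sigma^*$ keeps in the top-$k$ but phase one demotes.
\item The charge $w_{j(k+1)}\le w_{j\tau(j)}$ fails, for example, when all inputs place $j$ at or above $\tau(j)$.
\item Even if both charges held, $w_{j\tau(j)}$ is already spent paying for $j$'s own top-$k$ position. The accounting then gives $\opt+L+\sum_{j\in T\cap B^*}w_{j\tau(j)}$, i.e.\ a factor up to $3$.
\end{itemize}
The rank-ordered bijection and the exchange inequality meant to remove this slack are never formulated, so as written the theorem is not proved. The root issue is that with symmetric costs, demoting an element of $L^*$ can raise its own cost without bound. An element-by-element comparison with $\sigma^*$ therefore has nothing to charge that increase to.

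\textbf{The idea you are missing} is Proposition~\ref{prop:equaldisplacement}: $F(\pi,\sigma)=2\sum_i(\pi(i)-\sigma(i))\mathbbm{1}_{\sigma(i)<\pi(i)}$, so only upward displacement $\overleftarrow{\obj}$ needs to be tracked. The paper's phase one minimizes $\overleftarrow{\obj}$ of the top-$k$ list alone, with no charge for excluded elements. This gives $\overleftarrow{\obj}(\sigma_T)\le\overleftarrow{\obj}(\sigma^*_{L^*})$.

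For the bottom, keep $R\cap R^*$ (your $B^*=R^*$) at their $\sigma^*$ positions and place $R\setminus R^*$ arbitrarily in the vacated bottom slots. These elements sit in the top-$k$ of $\sigma^*$ and only move down, so their upward displacement cannot increase. This is exactly the monotonicity your symmetric $w$ lacks. Hence $\overleftarrow{\obj}(\sigma_R)\le\opt$, and the total is at most $\overleftarrow{\obj}(\sigma^*_{L^*})+\opt\le 2\opt$, with no special bijection needed. Your phase two needs no change: once the top is fixed, the symmetric and upward-only matchings minimize the same total.

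\textbf{Keeping your $\ell_i$-based algorithm.} It can be kept if you compare against a different completion.
\begin{itemize}
\item Let $m_i$ be a bottom position attaining $\ell_i$.
\item Let $\sigma_0$ place $R$ in positions $k+1,\dots,d$ sorted by $m_i$.
\item For each $\pi$, let $\pi_R$ place $R$ in those positions in $\pi$-order.
\end{itemize}
Sorted matching is $L_1$-optimal, so $\sum_{i\in R}|m_i-\sigma_0(i)|\le\sum_{i\in R}|m_i-\pi_R(i)|$. Moreover $\pi_R(i)\ge\pi(i)$ and $\sum_{i\in R}(\pi_R(i)-\pi(i))=\sum_{i\in T}(\pi(i)-\tau(i))\le\sum_{\pi}$-wise $\sum_{i\in T}|\pi(i)-\tau(i)|$. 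Averaging over $\pi\in S$ then gives $\sum_{i\in R}w_{i\sigma_0(i)}\le 2\sum_{i\in R}\ell_i+\sum_{i\in T}w_{i\tau(i)}$. So your output costs at most $2L\le 2\opt$.
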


\noindent \textbf{Extending a Top-$k$ list to a full ranking.}
To show this result, we first define a subproblem of extending a top-$k$ ranking to a full ranking.

\begin{definition}[Minimum Cost Top-$k$ List Completion Problem]
\label{def:top-k-completion-problem}
    We are given $S \subseteq S_d$ and a top-$k$ list $\tau$. Define $\tilde{S}_d := \{\sigma \in S_d \mid \forall a \in D_{\tau}, \sigma(a) = \tau(a) \}$. Then the \emph{minimum cost top-$k$ list completion} problem is to find a $\sigma \in \tilde{S}_d$ that minimizes $2 \sum_{\pi \in S} \sum_{i \in [d]} (\pi(i) - \sigma(i)) \cdot \mathbbm{1}_{\sigma(i) < \pi(i)}$.
\end{definition}

\begin{lemma}
\label{lemma:assign-remaining-elements}
    There is an algorithm that, given an instance ($S \subseteq S_d$ and a top-$k$ list $\tau$) of the minimum cost top-$k$ list completion problem, finds an optimal solution in time $O(nd^2 + d^3)$, where $n = |S|$.
\end{lemma}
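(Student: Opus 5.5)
The plan is to reduce the completion problem to a minimum-cost perfect bipartite matching between the remaining candidates and the remaining positions. The key observation is that the objective in Definition~\ref{def:top-k-completion-problem} is separable over candidates. For a fixed $\sigma$, the objective equals $\sum_{i\in[d]} c_{i\sigma(i)}$, where
\[ c_{ij} := 2\sum_{\pi\in S} (\pi(i)-j)\cdot \mathbbm{1}_{j<\pi(i)} \]
depends only on the candidate $i$ and the position $j$ it receives.

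For every $a\in D_\tau$, the position $\sigma(a)=\tau(a)$ is forced, so these terms contribute a constant independent of the choice of $\sigma\in\tilde{S}_d$. The remaining freedom is exactly a bijection between the set of unranked candidates $U:=[d]\setminus D_\tau$ and the set of free positions $P:=\{k+1,\dots,d\}$; both sets have size $d-k$. Every such bijection, combined with $\tau$, gives a member of $\tilde{S}_d$, and every member of $\tilde{S}_d$ arises this way. Hence minimizing the objective over $\tilde{S}_d$ is equivalent to finding a minimum-cost perfect matching in the complete bipartite graph on $U\times P$ with edge weights $c_{ij}$. The optimal $\sigma$ is then read off from the matching together with $\tau$.

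For the running time, I would first compute all weights $c_{ij}$ for $i\in U$ and $j\in P$. Done naively, this costs $O(n)$ per pair, for $O(nd^2)$ in total. One could do better by sorting the $\pi(i)$ values and using prefix sums, but the naive bound already matches the claimed complexity. Next I would run the Hungarian algorithm on the resulting $(d-k)\times(d-k)$ instance, which takes $O(d^3)$ time. Reconstructing $\sigma$ takes $O(d)$. The total is $O(nd^2+d^3)$.

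The main point needing care is the separability and bijection argument: checking that the indicator-weighted objective decomposes into per-candidate costs with no cross terms between candidates, and that the constraint $\sigma\in\tilde{S}_d$ corresponds exactly to perfect matchings of $U$ onto $P$. Once these two facts are established, optimality of $\sigma$ follows directly from optimality of the matching, and I expect no further obstacle.
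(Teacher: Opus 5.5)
Your proposal is correct and follows essentially the same route as the paper: a reduction to minimum-cost perfect matching on the complete bipartite graph between $[d]\setminus D_\tau$ and the positions $\{k+1,\dots,d\}$, with edge weights $2\sum_{\pi\in S}(\pi(a)-b)\cdot\mathbbm{1}_{b<\pi(a)}$, computed in $O(nd^2)$ time and solved in $O(d^3)$ time. The one difference is that you spell out the separability and bijection argument, which the paper treats as straightforward.
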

\begin{proof}
    We reduce this to an instance of minimum cost perfect matching as follows. Form a set $V$ by creating a vertex for each $a \in [d] \setminus D_\tau$, and a set $W$ by creating a vertex for each $b \in \{k+1,\cdots, d\}$. For each $a \in V, b \in W$, add an edge with weight equal to $2\sum_{\pi \in S} (\pi(a) - b) \cdot \mathbbm{1}_{b < \pi(a)}$, and let the set of edges be $E$. It is straightforward to see that $G = (V \cup W, E)$ forms a weighted complete bipartite graph. See Figure~\ref{fig:list-completion-example} for an illustration given $S$ consists of $\pi_1 = \{1, 2, 3, 4, 5, 6\}, \pi_2 = \{3, 6, 5, 1, 2, 4\}, \pi_3 = \{6, 1, 2, 5, 4, 3\}$ and $\tau = \{1, 2, 6\}$. It is well-known that the problem of finding a minimum cost perfect matching in a complete bipartite graph can be solved in time $O(v^3)$~\cite{edmonds1972theoretical} where $v$ is the number of vertices in the graph. Given a minimum cost perfect matching, $\sigma$ can easily be constructed by iterating over the matching, and for each matched edge $(a, b)$, placing the element $a$ at rank $b$ in $\sigma$. 

    As the number of vertices in the constructed graph $G$ is at most $2d$, and the reduction can be done in time $O(nd^2)$, the running time is bounded by $O(nd^2 + d^3)$.
\end{proof}
\iffalse
    It is well-known that the problem of finding a minimum cost perfect matching in a complete bipartite graph can be solved in time $O(v^3)$~\cite{edmonds1972theoretical} where $v$ is the number of vertices in the graph. Given a minimum cost perfect matching, $\sigma$ can easily be constructed by iterating over the matching, and for each matched edge $(a, b)$, placing the element $a$ at rank $b$ in $\sigma$. 

    As the number of vertices in the constructed graph $G$ is at most $2d$, and the reduction can be done in time $O(nd^2)$, the running time is bounded by $O(nd^2 + d^3)$.

\fi
\begin{figure}[t]
    \centering
    \includegraphics[width = 0.5\linewidth]{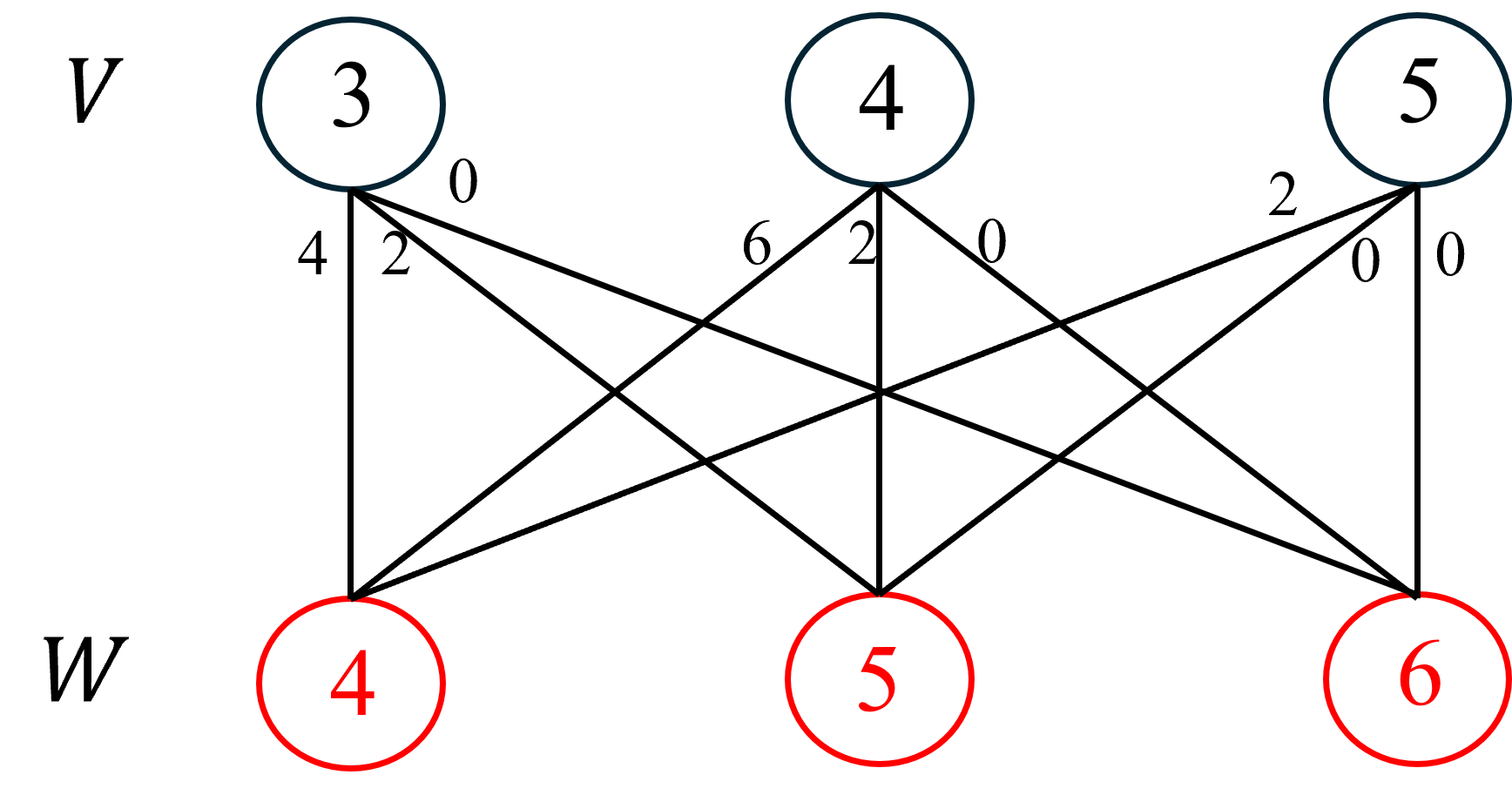}
    \caption{Example of the graph generated by the reduction for the example in the proof of Lemma~\ref{lemma:assign-remaining-elements}. For instance, the cost of placing $3$ at rank $4$ is equal to 4, as $\pi_1(3) < 4, \pi_2(3) < 4$ and $\pi_3(3) - 4 = 2$.}
    \label{fig:list-completion-example}
\end{figure}

\noindent \textbf{Fair Rank Aggregation.}
Our algorithm (to prove Theorem~\ref{thm:approx-full-ranking}) proceeds in two steps. It first constructs a top-$k$ list, and then extends it to a full ranking. We first describe a polynomial-time algorithm $\mathcal{A}$ to obtain this top-$k$ list. Consider the problem of finding a fair top-$k$ list $\tau$ which minimizes the following objective: $2 \sum_{\pi \in S} \sum_{i \in D_{\tau}} \left(\pi(i) - \tau(i) \right) \cdot \mathbbm{1}_{\tau(i) < \pi(i)} .$

Observe that this can be formulated as a linear program where all constraints are identical to~\ref{LP:topkonly} (Section~\ref{sec:topK}), except that the objective is to minimize $\sum_{i \in [d]} \sum_{j \le k} w_{ij} x_{ij}$, where $w_{ij} = 2 \sum_{\pi \in S} (\pi(i) - j) \cdot \mathbbm{1}_{j < \pi(i)}$. As the constraints are identical, the proof of Lemma~\ref{lemma:constraint-totally-unimod} holds identically for this linear program. By using a proof identical to that of Theorem~\ref{thm:top-k-only-fair}, there exists an algorithm that outputs a fair top-$k$ list $\tau$ that minimizes the above objective in polynomial time. Let $\mathcal{A}$ be this algorithm.

\textbf{Description of the algorithm.}
Suppose we are given a set $S$ of rankings over $[d]$, of size $n$. Then we construct a ranking using the following two-step procedure. 

First, apply algorithm $\mathcal{A}$ and obtain a fair top-$k$ ranking $\tau$. We then apply the algorithm from Lemma~\ref{lemma:assign-remaining-elements} to extend $\tau$ to obtain a full ranking $\sigma$. A formal description of the algorithm is given in Algorithm~\ref{alg:fair-aggregation}.

\begin{algorithm}[!h]
\caption{\textsc{Fair Rank Aggregation}}
\label{alg:fair-aggregation}
\begin{algorithmic}[1]
\STATE {\bfseries Input:} Set of rankings $S \subseteq \mathcal{S}_d$
    \STATE $\tau \gets $ top-$k$ fair list by applying algorithm $\mathcal{A}$ on $S$.
    \STATE $\sigma \gets $ full ranking from using Lemma~\ref{lemma:assign-remaining-elements} with $S$ and $\tau$ as input.
    \STATE \text{Return} $\sigma$
\end{algorithmic}
\end{algorithm}

\textbf{Analysis of the algorithm.}
Let us start by making some useful observations that we will use in the analysis. Recall that the Spearman footrule distance between two rankings is equal to the sum of displacements between the elements. It must be that the sum of rightward displacements must equal the sum of leftward displacements~\cite{mathieu2020aggregate}. Formally,

\begin{proposition}
\label{prop:equaldisplacement}
    Let $\pi, \sigma \in \mathcal{S}_d$. Then it holds that $\sum_{i \in [d]} (\sigma(i) - \pi(i)) \cdot \mathbbm{1}_{\pi(i) < \sigma(i)} = \sum_{i \in [d]} (\pi(i) - \sigma(i)) \cdot \mathbbm{1}_{\pi(i) > \sigma(i)}$. 
\end{proposition}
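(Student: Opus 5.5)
The proof rests on one identity: two permutations have the same multiset of values, so their rank sums agree. Since $\pi$ and $\sigma$ are both bijections from $[d]$ to $[d]$,
\[
\sum_{i \in [d]} \pi(i) = \sum_{i \in [d]} \sigma(i) = \frac{d(d+1)}{2},
\]
and therefore $\sum_{i \in [d]} (\sigma(i) - \pi(i)) = 0$.

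Next, I split this zero sum according to the sign of each term. Every $i \in [d]$ satisfies exactly one of $\pi(i) < \sigma(i)$, $\pi(i) > \sigma(i)$, or $\pi(i) = \sigma(i)$. The terms with $\pi(i) = \sigma(i)$ contribute nothing. Hence
\[
0 = \sum_{i \in [d]} (\sigma(i) - \pi(i)) \cdot \mathbbm{1}_{\pi(i) < \sigma(i)} + \sum_{i \in [d]} (\sigma(i) - \pi(i)) \cdot \mathbbm{1}_{\pi(i) > \sigma(i)}.
\]
Moving the second sum to the left-hand side and flipping its sign gives exactly the claimed equality of rightward and leftward displacements.

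There is no real obstacle. The only point needing care is that the identity uses the fact that both $\pi$ and $\sigma$ are full rankings in $\mathcal{S}_d$, so that their images coincide as sets. It would not hold for a top-$k$ list paired with a full ranking, and the proposition is correctly stated only for elements of $\mathcal{S}_d$.

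As a by-product, each side of the equality equals $F(\pi,\sigma)/2$. I expect this is the form used later in the analysis, where it explains the factor $2$ in the objective of Definition~\ref{def:top-k-completion-problem}.
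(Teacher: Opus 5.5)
Your proof is correct and complete: both permutations have rank sum $d(d+1)/2$, so $\sum_{i\in[d]}(\sigma(i)-\pi(i))=0$, and splitting this sum by the sign of each term gives exactly the stated identity. The paper itself gives no argument and simply attributes the fact to \cite{mathieu2020aggregate}, so your rank-sum argument supplies the standard self-contained justification. Your caveat also fits how the paper uses the proposition: the identity needs both arguments to be full rankings (it fails for the generalized footrule against a top-$k$ list), and the paper applies it only to full rankings such as $\sigma$ and $\sigma^*$ in Lemma~\ref{lemma:sigma-analysis}.
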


%For some $D \subseteq [d]$ and a ranking $\sigma$, let us denote the contribution to the objective cost of $D$ by $\obj(\sigma_{D})$.

%Let us define some notations that will be useful in the analysis. 
Let $L \subseteq [d]$ be the set of elements placed in the top-$k$ positions in $\sigma$ and $R = [d] \setminus L$. Let $\sigma^*$ denote an (arbitrary) optimal fair aggregate ranking, $L^* \subseteq [d]$ be the set of elements placed in its top-$k$ positions, and $R^* = [d] \setminus L^*$. 

From Proposition~\ref{prop:equaldisplacement} we have that $F(\pi, \sigma) = 2 \sum_{i \in [d]} (\pi(i) - \sigma(i)) \cdot \mathbbm{1}_{\sigma(i) < \pi(i)}$. Therefore, it holds that $\obj(\sigma) = 2 \sum_{\pi \in S} \sum_{i \in [d]} (\pi(i) - \sigma(i)) \cdot \mathbbm{1}_{\sigma(i) < \pi(i)}$. Let us now define for any ranking $\sigma$ and $D \subseteq [d]$,
\begin{equation}
\label{eq:left-displacement-objective}
    \overleftarrow{\obj}(\sigma_D) := 2 \sum_{\pi \in S} \sum_{i \in D} (\pi(i) - \sigma(i)) \cdot \mathbbm{1}_{\sigma(i) < \pi(i)}. 
\end{equation}

\begin{lemma}
\label{lemma:sigma-analysis}
    Consider the output ranking $\sigma$ in Algorithm~\ref{alg:fair-aggregation}. It holds that $\obj(\sigma) \le \overleftarrow{\obj}(\sigma^*_{L^*}) + \opt$.
\end{lemma}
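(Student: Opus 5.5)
We split $\obj(\sigma)$ into the contributions of the top-$k$ part and of the remainder, and bound each part separately. By Proposition~\ref{prop:equaldisplacement}, for every ranking we have $\obj(\sigma) = \overleftarrow{\obj}(\sigma_{L}) + \overleftarrow{\obj}(\sigma_{R})$. The two terms are handled by the two steps of Algorithm~\ref{alg:fair-aggregation}.

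\textbf{Top-$k$ part.} The restriction of $\sigma$ to its top $k$ positions is exactly $\tau$, so $\overleftarrow{\obj}(\sigma_L)$ equals the objective value of $\tau$ returned by $\mathcal{A}$. The restriction of $\sigma^*$ to its top $k$ positions is a fair top-$k$ list with domain $L^*$, hence a feasible solution for $\mathcal{A}$'s problem. By optimality of $\mathcal{A}$, we get $\overleftarrow{\obj}(\sigma_L) \le \overleftarrow{\obj}(\sigma^*_{L^*})$.

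\textbf{Remainder.} It suffices to show $\overleftarrow{\obj}(\sigma_R) \le \opt = \overleftarrow{\obj}(\sigma^*_{L^*}) + \overleftarrow{\obj}(\sigma^*_{R^*})$. By Lemma~\ref{lemma:assign-remaining-elements}, $\sigma$ is an optimal completion of $\tau$, so it is enough to exhibit one completion $\sigma'$ of $\tau$ whose $R$-part cost is at most $\opt$. The top-$k$ elements are fixed, so only the cost on $R$ varies.

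\textbf{Constructing $\sigma'$.} The sets $R \cap R^*$ and $R \cap L^*$ together fill positions $k+1,\dots,d$. Also, $|R\cap L^*| = |L \cap R^*|$.
\begin{itemize}
\item Each $i \in R \cap R^*$ keeps its position $\sigma^*(i) > k$.
\item The elements of $R \cap L^*$ fill the positions $\{\sigma^*(j) : j \in L\cap R^*\}$, which are all at least $k+1$.
\end{itemize}

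\textbf{Cost on $R \cap R^*$.} For these elements the cost is identical to their contribution to $\overleftarrow{\obj}(\sigma^*_{R^*})$.

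\textbf{Cost on $R\cap L^*$.} Here we must bound the left-displacement cost of placing these elements at positions greater than $k$. In $\sigma^*$ they sit at positions at most $k$. Moving an element to the right can only decrease its left displacement, since the function $p \mapsto (\pi(i)-p)^+$ is nonincreasing in $p$. So each element's new cost is at most its cost in $\sigma^*$, and their total is at most their contribution to $\overleftarrow{\obj}(\sigma^*_{L^*})$.

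\textbf{Conclusion.} Summing, $\overleftarrow{\obj}(\sigma'_R) \le \overleftarrow{\obj}(\sigma^*_{L^*}) + \overleftarrow{\obj}(\sigma^*_{R^*}) = \opt$. Combined with the top-$k$ bound, this gives $\obj(\sigma) \le \overleftarrow{\obj}(\sigma^*_{L^*}) + \opt$.

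The main obstacle is choosing the right witness completion. The monotonicity of left-displacement cost under rightward moves is exactly what allows the elements of $L^*$ that the algorithm excluded from its top $k$ to be charged against their cost in $\sigma^*$.
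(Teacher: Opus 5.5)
Your proposal is correct and matches the paper's proof step for step: the split $\obj(\sigma)=\overleftarrow{\obj}(\sigma_L)+\overleftarrow{\obj}(\sigma_R)$, the optimality of $\mathcal{A}$ for the top-$k$ part, and the same witness completion (keep $R\cap R^*$ at their $\sigma^*$ ranks and put $R\cap L^*$ in the leftover ranks), bounded by monotonicity of leftward displacement. The only cosmetic difference is that you identify the leftover ranks explicitly as $\sigma^*(L\cap R^*)$, whereas the paper places those elements arbitrarily in the remaining ranks.
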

\begin{proof}
    The top-$k$ ranks of $\sigma$ is equivalent to $\tau$ that is the output of algorithm $\mathcal{A}$, which finds a top-$k$ list that minimizes $2 \sum_{\pi \in S} \sum_{i \in [d]} \left(\pi(i) - \tau(i) \right) \cdot \mathbbm{1}_{\tau(i) < \pi(i)}$. Therefore, %it holds that 
    \begin{equation}
    \label{eqn:sigmal-optimality}
        \overleftarrow{\obj}(\sigma_L) \le \overleftarrow{\obj}(\sigma^*_{L^*}).
    \end{equation}

    Let us next analyze $\overleftarrow{\obj}(\sigma_R)$. For the sake of analysis, consider a ranking $\tilde{\sigma}$ that places elements of $R$ in the ranks $k+1, \cdots, d$ in the following way. For any $a \in R \cap R^*$, set $\tilde{\sigma}(a) = \sigma^*(a)$, that is, place $a$ in the same rank as $\sigma^*$ does. Then for any $a \in R \setminus R^*$, place $a$ arbitrarily into any of the remaining available ranks. 
    
    By definition, $\overleftarrow{\obj}(\tilde{\sigma}_R) =  \overleftarrow{\obj}(\tilde{\sigma}_{R \setminus R^*}) + \overleftarrow{\obj}(\tilde{\sigma}_{R \cap R^*})$. As all the elements in $R \cap R^*$ are placed in the same ranks as they are in $\sigma^*$, clearly 
    \begin{equation}
    \label{eqn:sigma-intersection}
        \overleftarrow{\obj}(\tilde{\sigma}_{R \cap R^*}) = \overleftarrow{\obj}(\sigma^*_{R \cap R^*}) \le \overleftarrow{\obj}(\sigma^*_{R^*}) .
    \end{equation} 
    
    Next, consider any element $a \in R \setminus R^*$. Observe that any such element must also be in $L^*$, and therefore be placed in the top-$k$ ranks of $\sigma^*$. This implies that $\sigma^*(a) < \tilde{\sigma}(a)$. Therefore for any $\pi \in S$,
    \[     (\pi(a) - \tilde{\sigma}(a) ) \cdot \mathbbm{1}_{\tilde{\sigma}(a) < \pi(a)} \le (\pi(a) - \sigma^*(a)) \cdot \mathbbm{1}_{\sigma^*(a) < \pi(a)}.  \]
    That is, the leftward displacement of $a$ in $\tilde{\sigma}$ is no more than the leftward displacement of $a$ in $\sigma^*$. By summing over all rankings in $S$, all $a \in R \setminus R^*$ and multiplying both sides by a factor of 2,
    \begin{align*}
        & 2 \sum_{\pi \in S} \sum_{a \in R \setminus R^*}(\pi(a) - \tilde{\sigma}(a) ) \cdot \mathbbm{1}_{\tilde{\sigma}(a) < \pi(a)} \\ \le \ & 2 \sum_{\pi \in S} \sum_{a \in R \setminus R^*} (\pi(a) - \sigma^*(a)) \cdot \mathbbm{1}_{\sigma^*(a) < \pi(a)} .
        %& \implies \overleftarrow{\obj} \left(\tilde{\sigma}_{R\setminus R^*} \right) \le \overleftarrow{\obj} \left(\sigma^*_{R \setminus R^*} \right) . \qquad \text{(By Equation~\ref{eq:left-displacement-objective})}
    \end{align*}
    Thus, by Equation~\ref{eq:left-displacement-objective},
    \begin{align}\label{eqn:sigma-setminus}
        \overleftarrow{\obj}(\tilde{\sigma}_{R \setminus R^*}) &\le \overleftarrow{\obj}(\sigma^*_{R \setminus R^*}) \nonumber\\
        & \le \overleftarrow{\obj}(\sigma^*_{L^*}) \; \text{(As $R \setminus R^* \subseteq {L^*}$)}.
    \end{align}
   % As $R \setminus R^* \subseteq {L^*}$,
    %\begin{equation}
    %\label{eqn:sigma-setminus}
     %   \overleftarrow{\obj}(\tilde{\sigma}_{R \setminus R^*}) \le \overleftarrow{\obj}(\sigma^*_{R \setminus R^*}) \le \overleftarrow{\obj}(\sigma^*_{L^*}).
    %\end{equation}
    Therefore, we get that
    \begin{align}
    \label{eqn:left-sigmaR-bound}
        \overleftarrow{\obj}(\tilde{\sigma}_R) & = \overleftarrow{\obj}(\tilde{\sigma}_{R \setminus R^*}) + \overleftarrow{\obj}(\tilde{\sigma}_{R \cap R^*}) \nonumber \\
        & \le \overleftarrow{\obj}(\sigma^*_{L^*}) +  \overleftarrow{\obj}(\tilde{\sigma}_{R \cap R^*}) \; \text{(By Equation~\ref{eqn:sigma-setminus})} \nonumber \\
        & \le \overleftarrow{\obj}(\sigma^*_{L^*}) + \overleftarrow{\obj}(\sigma^*_{R^*}) \le \opt.
    \end{align}

    Since the ranking $\sigma$ is obtained by optimally solving the minimum cost top-$k$ ranking completion problem from $\tau$, %it holds that
    \begin{equation}
    \label{eqn:sigmar-optimality}
        \overleftarrow{\obj}(\sigma_R) \le \overleftarrow{\obj}(\tilde{\sigma}_R).
    \end{equation}
    By combining the above inequalities, we obtain
    \begin{align*}
        \obj(\sigma) &= \overleftarrow{\obj}(\sigma_{[d]}) \\
        & = \overleftarrow{\obj}(\sigma_L) + \overleftarrow{\obj}(\sigma_R) \\
        & \le \overleftarrow{\obj}(\sigma_L) + \overleftarrow{\obj}(\tilde{\sigma}_R) && \text{(By Equation~\ref{eqn:sigmar-optimality})} \\
        & \le \overleftarrow{\obj}(\sigma_L) + \opt && \text{(By Equation~\ref{eqn:left-sigmaR-bound})} \\
        & \le \overleftarrow{\obj}(\sigma^*_{L^*}) + \opt && \text{(By Equation~\ref{eqn:sigmal-optimality}).}
    \end{align*}
\end{proof}

%Now, we can show the claimed approximation bound in Theorem~\ref{thm:approx-full-ranking}.

\begin{proof}[Proof of Theorem~\ref{thm:approx-full-ranking}]
    Consider the ranking $\sigma$ obtained in Algorithm~\ref{alg:fair-aggregation}.  Next, from Lemma~\ref{lemma:sigma-analysis}, 
    \begin{align*}
    \obj(\sigma) \le \overleftarrow{\obj}(\sigma^*_{L^*}) + \opt 
         &\le \overleftarrow{\obj}(\sigma^*_{[d]}) + \opt \\
        & \le \opt + \opt = 2\opt .
    \end{align*}
    Further, it is immediate from Theorem~\ref{thm:top-k-only-fair} and Lemma~\ref{lemma:assign-remaining-elements} that Algorithm~\ref{alg:fair-aggregation} runs in polynomial time.
\end{proof}

\section{Experiments}
In this section, we provide an empirical evaluation of our algorithm on real-world datasets. The algorithms are implemented using Python 3.12. Experiments are run on a laptop running Windows 11, using an Intel 275HX processor and 32GB of RAM. To find the optimal solution for comparison, we use Integer Linear Programming (ILP), implemented using Gurobi 12.0.3 as the solver. The code and dataset can be found on Github\footnote{https://github.com/Aussiroth/Spearman-FRA}.

\noindent \textbf{Baselines.} We use two baseline algorithms for comparison. First, the meta-algorithm of~\cite{chakraborty2022}, which finds a closest fair ranking for each input ranking, then returns the ranking among these that minimizes the objective cost. We label this algorithm ``BFI" and it is known to be a 3-approximation. Second, we use the state-of-the-art algorithm for fair rank aggregation under the Kendall-tau distance by~\cite{ijcai2025p38}. Their efficiently implementable algorithm gives an $18/7$-approximation for fair rank aggregation under the Kendall-tau distance, and thus implying $36/7$-approximation under the Spearman footrule. We label this algorithm as ``KT". %Note that this algorithm is only a 36/7 approximation for fair rank aggregation under the Spearman footrule distance.

\noindent \textbf{Datasets.} 
We use two datasets previously introduced for fair rank aggregation. The first dataset was introduced by~\cite{kuhlman2020rank} and consists of performance rankings provided by (real) experts over football players, taken from a fantasy sports website for American football. The dataset consists of 25 experts providing rankings over a set of players for 16 weeks of the 2019 season. We follow their work and divide players into two groups based on the conference the player's team is in. The second was introduced by~\cite{wei2022rank} and consists of rankings by 7 users over 268 movies taken from the Movielens dataset. The movies belong to 8 genres, and we follow their work to place movies into groups based on genre. In all instances, we set the $\bar{\alpha}, \bar{\beta}$ values to be equal to the proportion of each group in the input instance as an intuitive proportional fair constraint, same as in~\cite{ijcai2025p38}. 

\noindent \textbf{Research Questions.}
We also experimentally evaluate the solution quality of the proposed algorithm against the baselines.
\iffalse
\textbf{RQ1:} Is our algorithm able to find a fair aggregate ranking with a smaller objective cost than the current best algorithms for fair rank aggregation? How does it perform with respect to the theoretical guarantees of Theorem~\ref{thm:approx-full-ranking}?

\textbf{RQ2:} How robust is our algorithm with respect to the distance metric? Is our algorithm able to find good approximate fair rankings under the Kendall-tau distance in practice?
\fi

\begin{enumerate}[label={RQ\arabic*:}]
    \item Is our algorithm able to find a fair aggregate ranking with a smaller objective cost than the current best algorithms for fair rank aggregation? How does it perform with respect to the theoretical guarantees of Theorem~\ref{thm:approx-full-ranking}?
    \item How robust is our algorithm with respect to the distance metric? Is our algorithm able to find good approximate fair rankings under the Kendall-tau distance in practice?
\end{enumerate}

%\noindent \textbf{Results.}
\begin{figure}[t]
    \centering
    \includegraphics[width = 0.75\linewidth]{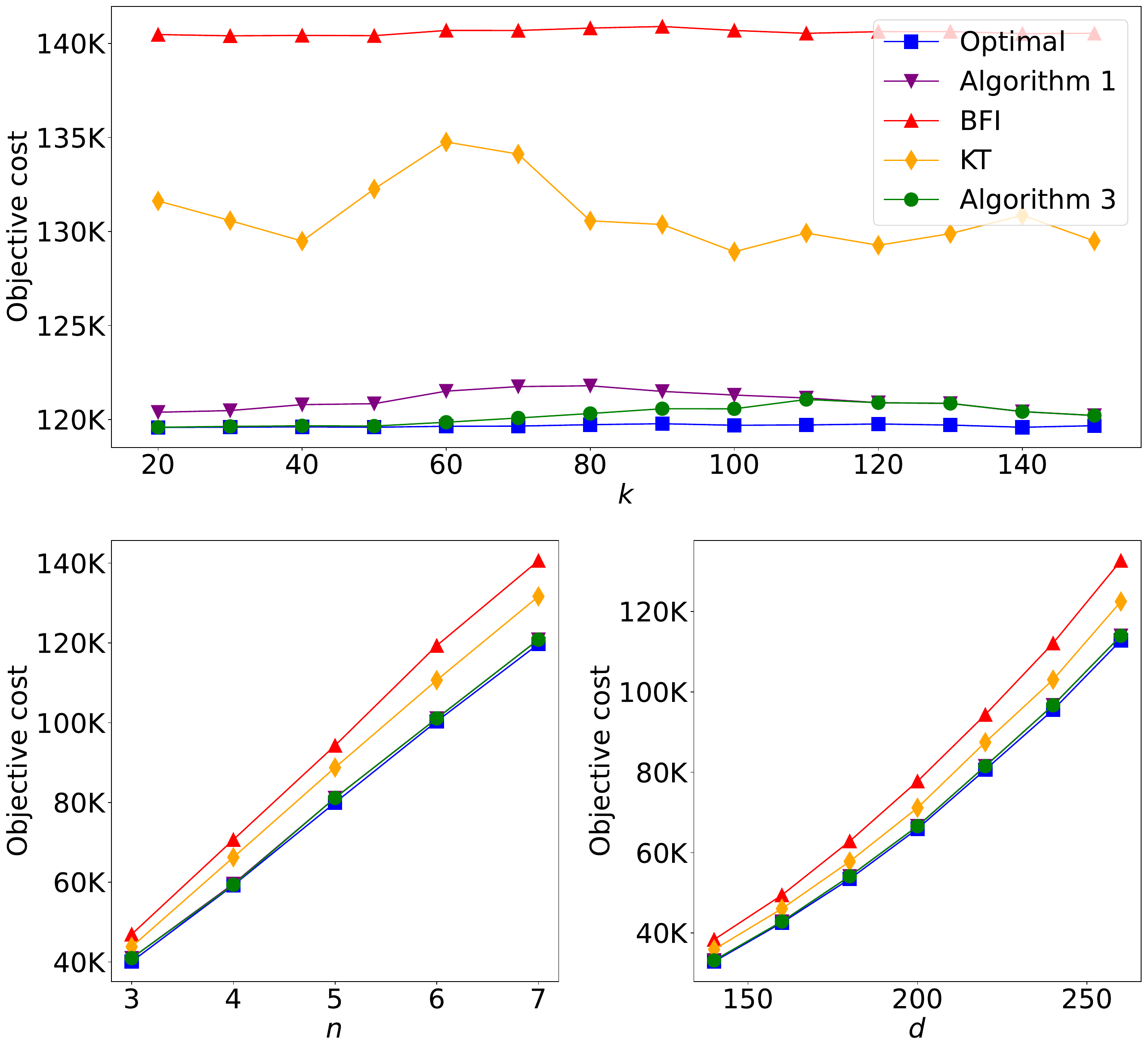}
    \caption{Results on Movielens dataset. Top figure shows varying $k$, bottom left shows varying $n$ and bottom right shows varying $d$.}
    \label{fig:movielens}
\end{figure}

\begin{figure}[t]
    \centering
    \includegraphics[width = 0.75\linewidth]{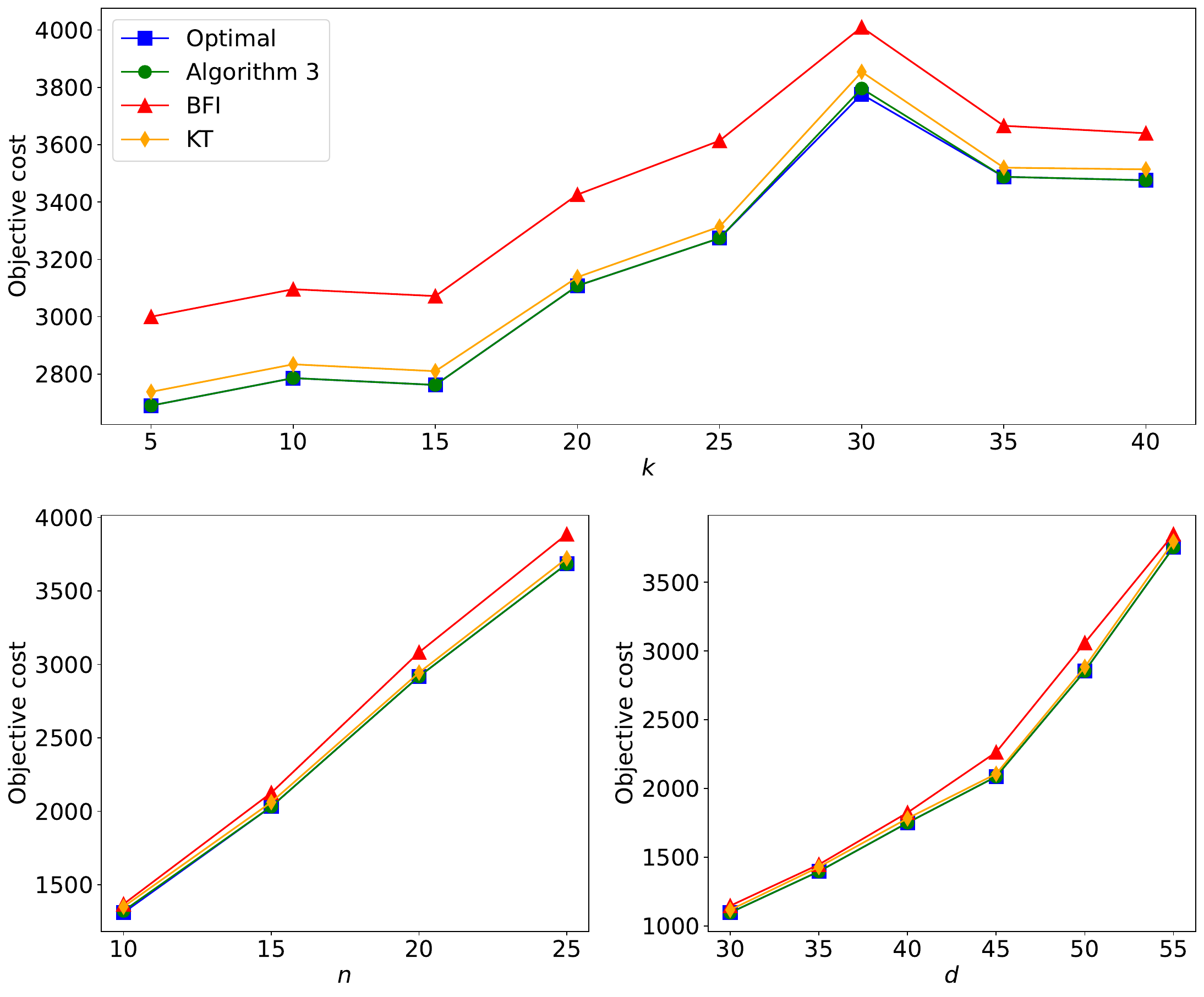}
    \caption{Results on football dataset (week 4). Top figure shows varying $k$, bottom left shows varying $n$ and bottom right shows varying $d$.}
    \label{fig:football}
\end{figure}

\noindent \textbf{Results: RQ1.}
We evaluate our algorithms against the baselines for the Movielens dataset with results in Figure~\ref{fig:movielens} and for the football dataset with results in Figure~\ref{fig:football}. We implement Algorithm~\ref{alg:fair-aggregation} labeled as such, and we further implement a slightly more intricate Algorithm~\ref{alg:full-fair-aggregation}, which is guaranteed to find a ranking with cost no more than Algorithm~\ref{alg:fair-aggregation} (see Appendix~\ref{sec:appendix-experiment} for details), albeit with the same $2$-approximation in the theoretical analysis. As both algorithms find rankings with the same objective cost for the football dataset, we keep only Algorithm~\ref{alg:full-fair-aggregation} in the figure for readability. As the results across all 16 instances of the football dataset are similar, we only plot results for one instance (week 4) of the dataset. We vary the parameters $k$, $n$, and $d$ in the input instances to study the impact on algorithm performance, varying $n$ and $d$ by deterministically keeping an (arbitrary) subset of the input instance (we observe the choice of subset does not affect the performance, and thus we plot for one such deterministically chosen arbitrary choice). For the Movielens dataset in varying $k$, we use the full instance ($n = 7, d = 268$), and for varying $n$ and $d$ we fix $k = d/2$. We do the same for the football dataset, using the full instance ($n = 25, d = 57$) for varying $k$, and fixing $k = d/2$ when varying $n$ and $d$. 

We observe that our algorithm (Algorithm~\ref{alg:full-fair-aggregation}) performs much better than the theoretical guarantees. It finds a fair aggregate ranking with an objective cost within 2\% - 3\% from optimal in the Movielens dataset, and at most 1\% from optimal in the football dataset (and often finds an optimal fair aggregate ranking). It also performs better than both baselines in both datasets. In the Movielens dataset, our algorithm finds a ranking with an objective cost 5\% - 11\% less than that found by KT, and 13\% - 15\% less than BFI. In the football dataset, our algorithm finds a ranking with an objective cost between 1\% - 2\% less than KT, and 5\% - 10\% less than BFI. Our algorithm is robust and performs well in all cases as the various parameters are varied.

\begin{figure}[t]
    \centering
    \includegraphics[width = 0.75\linewidth]{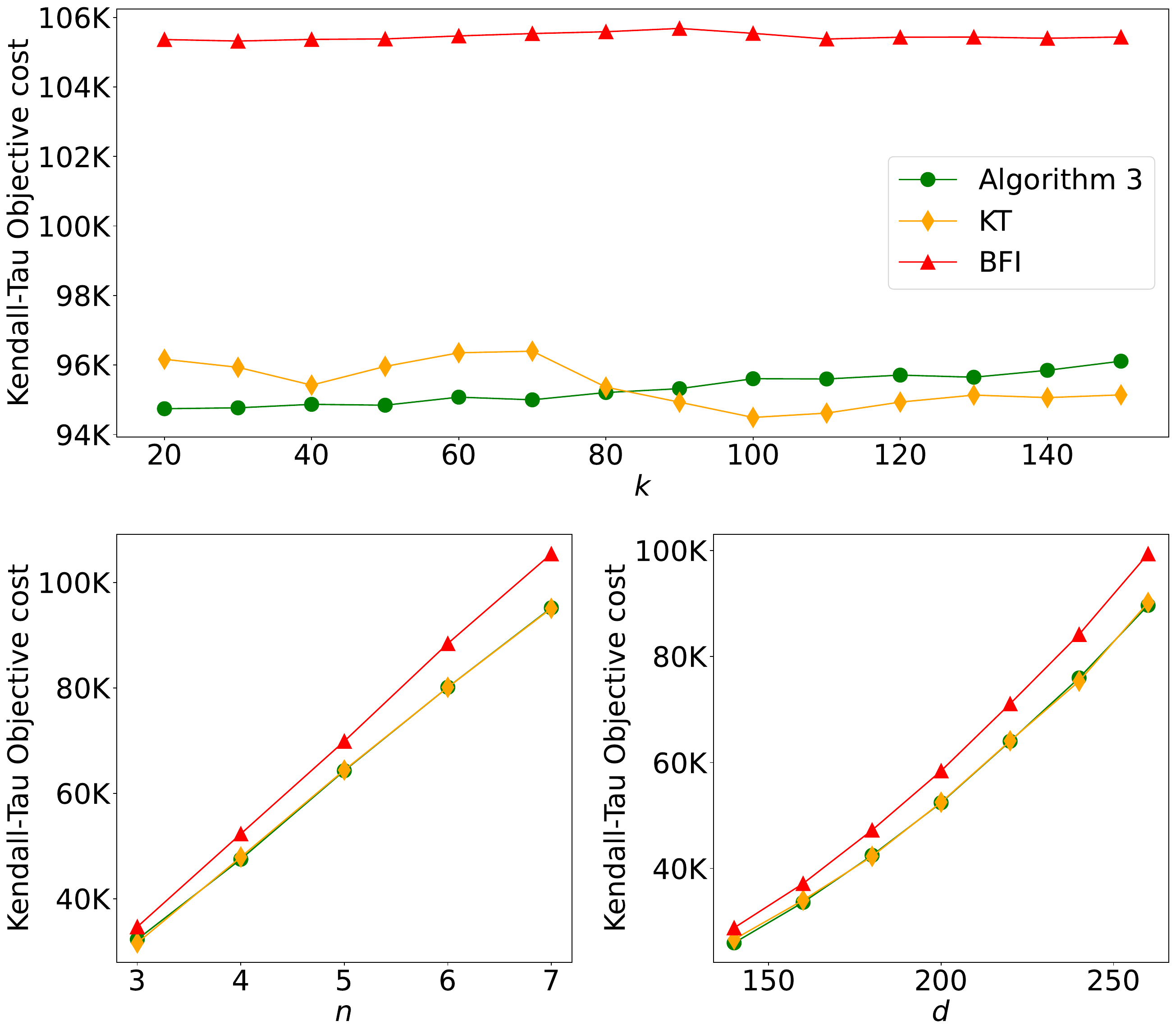}
    \caption{Results on the Movielens dataset for fair rank aggregation under the Kendall-tau distance.}
    \label{fig:movielens-ktobj}
\end{figure}

\begin{figure}[t]
    \centering
    \includegraphics[width = 0.75\linewidth]{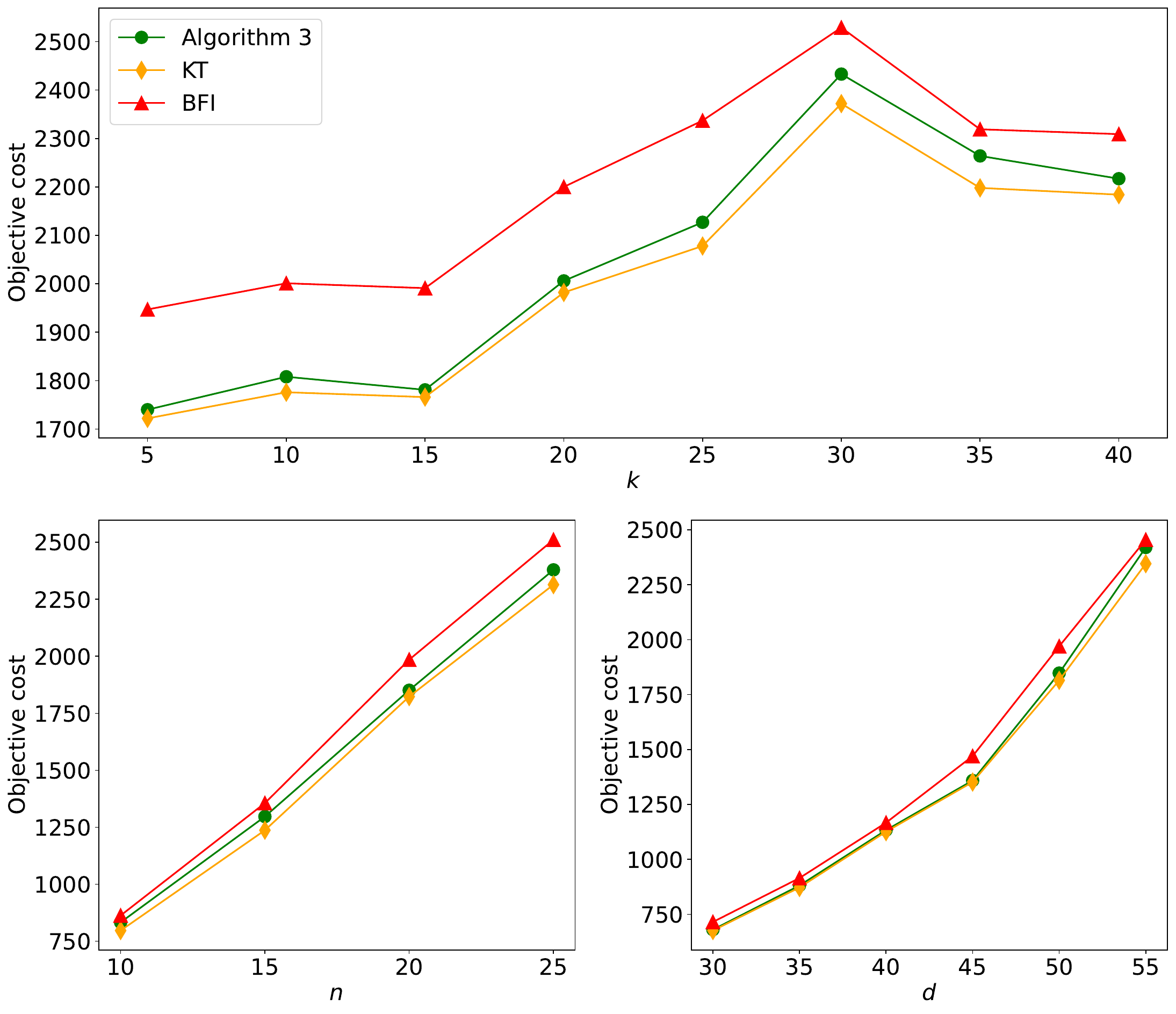}
    \caption{Results on football dataset for fair rank aggregation under Kendall-tau distance.}
    \label{fig:football-ktobj}
\end{figure}

\noindent \textbf{Results: RQ2.} We also test the efficiency of our algorithm on fair rank aggregation under the Kendall-tau metric. Note, as we have observed, Algorithm~\ref{alg:full-fair-aggregation} provides slightly better empirical performance, so we implement and plot only this algorithm for all experiments for fair rank aggregation under the Kendall-tau distance. Recall that the Spearman footrule and Kendall-tau distance between two rankings are always within a factor of two~\cite{diaconis1977spearman}. Therefore, Theorem~\ref{thm:approx-full-ranking} immediately implies as a corollary that our algorithm is a $4$-approximation for fair rank aggregation under the Kendall-tau distance. We compare this algorithm to KT and BFI in the Movielens dataset, and plot the results in Figure~\ref{fig:movielens-ktobj}. Our algorithm performs much better than the theoretical $4$-approximation; in fact, it always outperforms the $3$-approximation algorithm BFI. In particular, we observe that our algorithm finds a ranking with objective cost within 2\% of that found by KT. In fact, it occasionally finds a better fair aggregate ranking than KT. Despite BFI also being a $3$-approximation algorithm in theory, our algorithm performs much better than BFI, finding a fair aggregate ranking with around 8\% - 10\% less objective cost than BFI. %We defer the results on the football dataset to the Appendix. 

We bolster our empirical findings on the performance of our algorithm for fair rank aggregation under the Kendall-tau metric by studying the results on the football dataset as well, and plot the results in figure~\ref{fig:football-ktobj}. We observe that our algorithm performs consistently slightly worse than KT (which is also expected) in all instances on this dataset. However, it still finds a fair aggregate ranking with cost at most 2\% worse than that found by KT. Our algorithm continues to perform significantly better than BFI in practice, finding a fair aggregate ranking between 4\% to 10\% better than that of BFI across all instances.

\section{Conclusion and Future Work}
This paper investigates the problem of fair rank aggregation, specifically for the top-$k$ and full rank variants, under the Spearman footrule metric. Incorporating the proportional fairness constraint, we propose an exact algorithm for the top-$k$ problem and a $2$-approximation algorithm for the full rank version, which improves upon the state-of-the-art $3$-approximation~\cite{wei2022rank, chakraborty2022}.

Empirical evaluation demonstrates that our algorithm consistently produces nearly optimal fair ranking, far exceeding the performance suggested by our theoretical bound. Furthermore, the algorithm shows robustness across different metrics, achieving results comparable to or superior to specialized algorithms for the Kendall-tau metric~\cite{ijcai2025p38}.

Improving the approximation factor would be an immediate open question. It would also be interesting to further refine our algorithm's analysis to close the gap between its theoretical and empirical performance. Other promising research directions involve exploring stronger fairness notions than top-$k$ fairness, such as block fairness~\cite{chakraborty2022}. It is also interesting to study fairness in more general settings, such as for overlapping groups or in a group-oblivious setting.

\section*{Acknowledgments}
Diptarka Chakraborty was supported in part by an MoE AcRF Tier 1 grant (T1 251RES2303) and a Google South \& South-East Asia Research Award. The research of Barna Saha and Arya Mazumdar was supported by NSF TRIPODS award 2217058 (EnCORE Inst.).

\section*{Impact Statement}

This paper presents work whose goal is to advance the field of machine learning. There are many potential societal consequences of our work, none of which we feel must be specifically highlighted here.

\bibliography{biblio}
\bibliographystyle{icml2026}

%%%%%%%%%%%%%%%%%%%%%%%%%%%%%%%%%%%%%%%%%%%%%%%%%%%%%%%%%%%%%%%%%%%%%%%%%%%%%%%
%%%%%%%%%%%%%%%%%%%%%%%%%%%%%%%%%%%%%%%%%%%%%%%%%%%%%%%%%%%%%%%%%%%%%%%%%%%%%%%
% APPENDIX
%%%%%%%%%%%%%%%%%%%%%%%%%%%%%%%%%%%%%%%%%%%%%%%%%%%%%%%%%%%%%%%%%%%%%%%%%%%%%%%
%%%%%%%%%%%%%%%%%%%%%%%%%%%%%%%%%%%%%%%%%%%%%%%%%%%%%%%%%%%%%%%%%%%%%%%%%%%%%%%
%\newpage
\appendix
%\onecolumn
%\section{Other Related Works}
%\label{sec:other-related}
%Ensuring fairness is essential in many real-world scenarios. Without explicit fairness considerations, they risk perpetuating systemic biases and harmful stereotypes against underrepresented groups, which are often defined by sensitive attributes like race, gender, or caste~\cite{kay2015unequal, costello2016views, bolukbasi2016man}. In many countries, fairness constraints are enforced by legal norms to counteract historical discrimination, such as reservation systems for jobs and education in India~\cite{rana2008reservations, borooah2010social}, affirmative actions in employment in South Africa~\cite{burger2010affirmative, thaver2017affirmative}, and candidate selection by political parties in Spain~\cite{verge2010gendering}.

%The rank aggregation problem can be interpreted as the median problem -- namely, the $1$-clustering problem -- defined over a metric space of rankings. There is an expanding literature on clustering under proportional fairness, including work on $k$-clustering~\cite{chierichetti2017fair, HuangJV19}, proportional clustering~\cite{ChenFLM19}, fair representational clustering~\cite{bera2019fair, bercea2019cost}, pairwise fair clustering~\cite{bandyapadhyay2024coresets, bandyapadhyay2025constant}, correlation clustering~\cite{pmlr-v108-ahmadian20a, ahmadi2020fair, ahmadian2023improved}, and consensus clustering~\cite{ChakrabortyCDNN25, abs-2511-11539}, among others. However, it should be noted that the notion of fairness in general clustering differs from that in rank aggregation.

\section{$4$-approximation for Fair Rank Aggregation under the Kendall-tau distance}
\label{sec:appendix-4approx-kendalltau}

In this section, we argue that the output ranking of Algorithm~\ref{alg:fair-aggregation} attains a $4$-approximation for the fair (full) rank aggregation problem under the Kendall-tau distance. For any two rankings $\pi$ and $\sigma$, let the \emph{Kendall-tau distance} between them be denoted by $\kappa(\pi, \sigma)$.

\begin{theorem}[~\cite{diaconis1977spearman}]
\label{thm:diaconis}
    For any two $\pi, \sigma \in \mathcal{S}_d$, $\kappa(\pi, \sigma) \le F(\pi, \sigma) \le 2 \kappa(\pi, \sigma)$.
\end{theorem}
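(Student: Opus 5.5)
We prove the two inequalities separately, both by a charging argument over pairs and elements. Fix $\pi,\sigma\in\mathcal{S}_d$. Write $I$ for the set of \emph{discordant pairs} $\{a,b\}$, i.e.\ pairs ordered differently by $\pi$ and $\sigma$. Then $\kappa(\pi,\sigma)=|I|$. For each element $a$, let $I_a$ be the set of discordant pairs containing $a$.

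\textbf{Upper bound $F\le 2\kappa$.} For each $a$, we aim to show $|\pi(a)-\sigma(a)|\le |I_a|$. Suppose w.l.o.g.\ $\sigma(a)>\pi(a)$. Exactly $\sigma(a)-1$ elements precede $a$ in $\sigma$, but only $\pi(a)-1$ precede it in $\pi$. Hence at least $\sigma(a)-\pi(a)$ elements $b$ lie before $a$ in $\sigma$ and after $a$ in $\pi$. Each such $b$ gives a discordant pair $\{a,b\}\in I_a$. Summing over $a$, every discordant pair is counted exactly twice, once from each endpoint. This gives $F(\pi,\sigma)\le\sum_a|I_a|=2\kappa(\pi,\sigma)$.

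\textbf{Lower bound $\kappa\le F$.} This is the harder direction, since a single element's displacement can be small while it lies in many inversions. We proceed by induction on $\kappa$, using the fact that an adjacent transposition changes $F$ by at most $2$ and changes $\kappa$ by exactly $1$; we expect this direction alone not to yield the right constant. Instead we use a direct injection argument.

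Relabel so that $\pi$ is the identity. Then $\kappa$ counts pairs $a<b$ with $\sigma(a)>\sigma(b)$. For each such pair we charge one unit to the displacement of $a$ or of $b$. If $\sigma(a)>a$, $a$ is moved right, so $a$ absorbs the charge; otherwise $\sigma(a)\le a<b$, so $\sigma(b)<\sigma(a)\le a<b$, and $b$ is moved left, so $b$ absorbs it. What must be checked is the capacity of each element. We need the number of pairs charged to an element $c$ that is moved right to be at most its displacement plus its share of the remaining slack. Instead, we use the identity $F(\pi,\sigma)=2\sum_a(\sigma(a)-a)^+$ from Proposition~\ref{prop:equaldisplacement}, and bound
\[
\kappa\le\sum_a\#\{b>a:\sigma(b)<\sigma(a)\}+\#\{b<a:\sigma(b)>\sigma(a)\}\text{ restricted appropriately}
\]
by $\sum_a |\sigma(a)-a|$. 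The key counting fact is that $\#\{b>a:\sigma(b)<\sigma(a)\}-\#\{b<a:\sigma(b)>\sigma(a)\}=\sigma(a)-a$. Using it, the inversion count rewrites as $\sum_a\min(\cdot,\cdot)$ plus $\tfrac12 F$. This min term is the step I expect to be the main obstacle. The fallback is to cite the classical proof of~\cite{diaconis1977spearman} directly, which is how the statement is attributed.
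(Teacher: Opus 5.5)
The paper gives no proof of this statement; it is quoted from Diaconis and Graham. So the question is whether your argument stands on its own.

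Your upper bound $F\le 2\kappa$ is correct and complete. The lower bound $\kappa\le F$ is not proved, as you essentially concede. Concretely:
\begin{itemize}
\item The capacity check for your charging scheme is never carried out, and it cannot be done element by element. Write $R_a=\#\{b>a:\sigma(b)<\sigma(a)\}$ and $L_a=\#\{b<a:\sigma(b)>\sigma(a)\}$. A right-moving $a$ absorbs $R_a=(\sigma(a)-a)+L_a$ charges. For $\sigma=4\,5\,6\,3\,1\,2$ (listed by position), element $3$ moves right by $1$ but absorbs $3$ charges, more than twice its displacement.
\item Your identity $R_a-L_a=\sigma(a)-a$ is right. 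But together with $\sum_a R_a=\sum_a L_a=\kappa$ it only gives $2\kappa=F+2\sum_a\min(R_a,L_a)$. So the ``main obstacle'' $\sum_a\min(R_a,L_a)\le F/2$ is literally equivalent to $\kappa\le F$: you have restated the goal, not reduced it.
\item An adjacent transposition changes $\kappa$ by $1$ but $F$ by $0$ or $\pm 2$, so induction along such moves cannot bound $F$ from below.
\end{itemize}

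The missing idea is a move that lowers $F$ by more than it lowers $\kappa$; a long-range swap does this. With $\pi=\mathrm{id}$, induct on the number of non-fixed points of $\sigma$.
\begin{itemize}
\item Let $m$ be the largest element with $\sigma(m)\neq m$. Since positions beyond $m$ hold exactly the elements beyond $m$, we have $p:=\sigma(m)<m$, and the element $c$ with $\sigma(c)=m$ satisfies $c<m$.
\item Let $\sigma'$ swap the positions of $c$ and $m$, so $\sigma'(m)=m$ and $\sigma'(c)=p$. Put $q=\max(p,c)$. Then $F(\sigma)-F(\sigma')=(m-p)+(m-c)-|p-c|=2(m-q)$.
\item For $\kappa$: the pair $\{c,m\}$ is an inversion that disappears. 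The only other pairs that change status are those formed by $c$ or $m$ with an element $e$ at a position strictly between $p$ and $m$. Every such $e$ satisfies $e<m$, and its net contribution is $-2$ if $e>c$ and $0$ if $e<c$.
\item Hence $\kappa(\sigma)-\kappa(\sigma')=1+2x$ with $x\le\min(m-p-1,\,m-c-1)=m-q-1$, so $\kappa$ drops by at most $2(m-q)-1$.
\item Since $\sigma'$ has strictly more fixed points, the induction hypothesis gives $\kappa(\sigma)\le F(\sigma')+2(m-q)-1=F(\sigma)-1$.
\end{itemize}
Summing the per-step gain $\Delta F-\Delta\kappa\ge 1$ over the at least $T(\sigma)$ swaps (Cayley distance) even recovers the stronger Diaconis--Graham bound $\kappa+T\le F$.
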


We first prove that an $\alpha$-approximate aggregate ranking to the fair rank aggregation under the Spearman footrule distance is a $2\alpha$-approximation to the fair rank aggregation under the Kendall-tau distance. This, combined with Theorem~\ref{thm:approx-full-ranking}, immediately shows that our algorithm finds a $4$-approximation to fair rank aggregation under the Kendall-tau distance.

\begin{theorem}
\label{thm:SF-KT}
    Let $S \subseteq S_d$ and $\tilde{\pi}$ be an $\alpha$-approximate fair aggregate ranking for the fair rank aggregation problem under the Spearman footrule distance. Then $\tilde{\pi}$ is a $2\alpha$ approximation for the fair rank aggregation problem under the Kendall-tau distance. 
\end{theorem}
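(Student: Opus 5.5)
The argument is a direct sandwich using Theorem~\ref{thm:diaconis}, applied term by term to the objective. Since the fairness constraint depends only on the ranking and not on the metric, both problems share the same feasible set: the $(\bar{\alpha},\bar{\beta})$-$k$-fair rankings in $\mathcal{S}_d$.

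Define the two objectives $\obj_F(\sigma) := \sum_{\pi \in S} F(\pi,\sigma)$ and $\obj_\kappa(\sigma) := \sum_{\pi\in S}\kappa(\pi,\sigma)$. Let $\sigma^*_F$ and $\sigma^*_\kappa$ be optimal fair rankings for the Spearman footrule and Kendall-tau versions, respectively.

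Summing Theorem~\ref{thm:diaconis} over $\pi \in S$ gives, for every ranking $\sigma$,
\[
\obj_\kappa(\sigma) \le \obj_F(\sigma) \le 2\,\obj_\kappa(\sigma).
\]
We then chain inequalities:
\[
\obj_\kappa(\tilde{\pi}) \le \obj_F(\tilde{\pi}) \le \alpha\,\obj_F(\sigma^*_F) \le \alpha\,\obj_F(\sigma^*_\kappa) \le 2\alpha\,\obj_\kappa(\sigma^*_\kappa).
\]
The steps are justified as follows.
\begin{itemize}
\item The first inequality is the left side of the summed Diaconis bound.
\item The second is the $\alpha$-approximation hypothesis on $\tilde{\pi}$.
\item The third holds because $\sigma^*_\kappa$ is fair, so it is feasible for the footrule problem and cannot beat its optimum.
\item The fourth is the right side of the summed Diaconis bound.
\end{itemize}
Since $\tilde{\pi}$ is itself fair, this shows it is a $2\alpha$-approximation under Kendall-tau. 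Combined with Theorem~\ref{thm:approx-full-ranking}, which gives $\alpha = 2$, this yields the claimed $4$-approximation.

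There is no real obstacle in this argument. The only point needing care is the third step, which compares against the Kendall-tau optimum through feasibility in the footrule problem. This relies on the feasible sets being identical, which holds by Definition~\ref{def:fair-rank-aggregation}.
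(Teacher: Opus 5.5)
Your proof is correct and follows the paper's argument essentially line for line. Both chain the lower bound of Theorem~\ref{thm:diaconis}, the $\alpha$-approximation hypothesis, the footrule-optimality of the fair footrule optimum against the fair (hence feasible) Kendall-tau optimum, and the upper bound of Theorem~\ref{thm:diaconis}; your explicit remark that the two problems share the same feasible set states a point the paper uses only implicitly.
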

\begin{proof}
    For the set $S$, let $\pi^*$ be an optimal fair aggregate ranking under the Spearman footrule distance, and $\sigma^*$ be an optimal fair aggregate ranking under the Kendall-tau distance.

    As $\tilde{\pi}$ is an $\alpha$-approximate fair aggregate ranking for the fair rank aggregation problem under the Spearman footrule distance, 
    \[
    \sum_{\pi \in S} F(\pi, \tilde{\pi}) \le \alpha \sum_{\pi \in S} F(\pi, \pi^*).
    \]
    By the optimality of $\pi^*$, 
    \[
    \sum_{\pi \in S} F(\pi, \pi^*) \le \sum_{\pi \in S} F(\pi, \sigma^*).
    \]
    Combining these, we obtain

    \begin{equation}
    \label{eqn:spearman-to-kt-approx}
        \sum_{\pi \in S} F(\pi, \tilde{\pi}) \le \alpha \sum_{\pi \in S} F(\pi, \sigma^*)
    \end{equation}

    By Theorem~\ref{thm:diaconis}, we have that for any ranking $\pi$, $\kappa(\pi, \tilde{\pi}) \le F(\pi, \tilde{\pi})$. Also by Theorem~\ref{thm:diaconis}, we get that $F(\pi, \sigma^*) \le 2 \kappa(\pi, \sigma^*)$. Putting these together, we obtain that

    \begin{align*}
        \sum_{\pi \in S} \kappa (\pi, \tilde{\pi}) & \le \sum_{\pi \in S} F (\pi, \tilde{\pi}) \\
        & \le  \alpha \sum_{\pi \in S} F(\pi, \sigma^*) && \text{(By Equation~\ref{eqn:spearman-to-kt-approx})} \\
        & \le 2 \alpha \sum_{\pi \in S} \kappa(\pi, \sigma^*)
    \end{align*}

    which concludes the proof.
\end{proof}

By Theorem~\ref{thm:approx-full-ranking}, Algorithm~\ref{alg:fair-aggregation} outputs a fair ranking which is a 2-approximation for the fair rank aggregation under the Spearman footrule distance. Therefore, as a direct corollary of Theorem~\ref{thm:SF-KT}, we get that this fair ranking is a 4-approximation for the fair rank aggregation problem under the Kendall-tau distance.

\section{Alternative $2$-approximation algorithm for Fair Rank Aggregation}
\label{sec:appendix-alternative-2approx}

In this section, we provide an alternative $2$-approximation algorithm for fair rank aggregation. To do that, we again use a two-step process. Before that, we describe the algorithms we will use.

We can also solve the problem of finding an optimal $(\bar{\alpha}, \bar{\beta})$-constrained bottom-$k$ ranking. Formally, let a bottom-$k$ ranking $\tau$ be a bijection from a domain $D_\tau \subseteq [d]$ to $\{d-k+1, \cdots, d\}$. Consider a partitioning of $d$ candidates into $g$ groups, $G_1, \cdots, G_g$ and for each group an $\alpha_i, \beta_i \in [0, d]$. For $\bar{\alpha} = (\alpha_1, \cdots, \alpha_g, \bar{\beta} = (\beta_1, \cdots, \beta_g)$, an $(\bar{\alpha}, \bar{\beta})$-constrained bottom-$k$ ranking is one such that, for all $G_i \in \mathcal{G}$, there are at least $\alpha_i$ and at most $\beta_i$ elements from $G_i$ in $D_\tau$. Let the Spearman footrule distance between a ranking and a bottom-$k$ ranking be defined similarly to that between a ranking and a top-$k$ ranking, except that elements not in $D_{\tau}$ are placed at rank $k$.

We first describe a polynomial time algorithm $\mathcal{B}$ to obtain a $\bar{\alpha}, \bar{\beta}$-constrained bottom-$k$ ranking $\tau$, that minimizes that following objective \[ 2 \sum_{\pi \in S} \sum_{i \in [d]} \left(\tau(i) - \pi(i) \right) \cdot \mathbbm{1}_{\pi(i) < \tau(i)} .\]

Observe that this can be formulated as a linear program which is similar to~\ref{LP:topkonly}. The objective is now to minimize $\sum_{i \in [d]} \sum_{j > k} w_{ij} x_{ij}$, where $w_{ij} = 2 \sum_{\pi \in S} (j - \pi_i) \cdot \mathbbm{1}_{\pi(i) < j}$. The constraints are similar in form, except that as the variables are now $x_{ij}$ for all $i \in [d], j > k$, the constraints now lie on these variables. It can be verified that the arguments in the proof of Lemma~\ref{lemma:constraint-totally-unimod} hold identically for this linear program. Then, using a proof identical to that of Theorem~\ref{thm:top-k-only-fair}, there exists an algorithm that outputs a constrained bottom-$k$ ranking $\tau$ that minimizes the objective in polynomial time. Let $\mathcal{B}$ be this algorithm.

Further, one can also define a problem similar to Definition~\ref{def:top-k-completion-problem}, except that the goal is to extend a bottom-$k$ ranking to a full ranking $\sigma$.

\begin{definition}[Minimum Cost Bottom-$k$ Ranking Completion Problem]
    Let $S \subseteq S_d$ be given, as well as a bottom-$k$ ranking $\tau$. Define $\tilde{S}_d := \{\sigma \in S_d \mid \forall a \in D_{\tau}, \sigma(a) = \tau(a) \}$. Then the minimum cost bottom-$k$ ranking completion problem is to find $\sigma \in \tilde{S}_d$ which minimizes $2 \sum_{\pi \in S} \sum_{i \in [d]} (\sigma(i)- \pi(i)) \cdot \mathbbm{1}_{\pi(i) < \sigma(i)}$.
\end{definition}
\begin{lemma}
\label{lemma:bottom-k-completion}
    There is an algorithm that finds a ranking $\sigma$ that is an optimal solution to the minimum cost bottom-$k$ ranking completion problem. The algorithm runs in time $O(nd^2 + d^3)$.
\end{lemma}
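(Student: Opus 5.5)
The plan is to mirror the proof of Lemma~\ref{lemma:assign-remaining-elements} and reduce the problem to a minimum cost perfect matching on a complete bipartite graph. Since $\tau$ is a bottom-$k$ ranking, it fixes the elements of $D_\tau$ at ranks $d-k+1,\dots,d$. Any $\sigma \in \tilde{S}_d$ is therefore determined by a bijection between $[d]\setminus D_\tau$ and the ranks $\{1,\dots,d-k\}$.

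The objective decomposes as a sum over elements $i$. For $i \in D_\tau$, the term $2\sum_{\pi\in S}(\tau(i)-\pi(i))\cdot\mathbbm{1}_{\pi(i)<\tau(i)}$ is a constant independent of the completion. For each remaining element $a$, the contribution depends only on the rank $b=\sigma(a)$ assigned to it, and equals $2\sum_{\pi \in S}(b-\pi(a))\cdot\mathbbm{1}_{\pi(a)<b}$.

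The construction proceeds as follows.
\begin{enumerate}
\item Let $V$ have one vertex for each $a\in[d]\setminus D_\tau$, and let $W$ have one vertex for each $b\in\{1,\dots,d-k\}$. Then $|V|=|W|=d-k$.
\item Join every $a \in V$ to every $b \in W$ by an edge of weight $2\sum_{\pi\in S}(b-\pi(a))\cdot\mathbbm{1}_{\pi(a)<b}$.
\end{enumerate}

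Perfect matchings of this graph correspond bijectively to rankings in $\tilde{S}_d$. Under this correspondence, the cost of a matching equals the objective of the associated $\sigma$ minus the fixed constant coming from $D_\tau$. A minimum cost perfect matching therefore yields an optimal $\sigma$, obtained by placing each matched $a$ at its matched rank $b$.

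For the running time, each of the $O(d^2)$ edge weights is a sum over $n$ rankings, so building the graph takes $O(nd^2)$ time. The graph has at most $2d$ vertices, so the minimum cost perfect matching can be found in $O(d^3)$ time~\cite{edmonds1972theoretical}. The total is $O(nd^2+d^3)$.

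There is no real obstacle here. The only point needing care is checking that the cost separates per element, so that the completion cost is exactly the matching cost up to an additive constant.
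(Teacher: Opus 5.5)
Your proposal is correct and is the same argument as the paper's. The paper proves this lemma only by pointing to the min-cost perfect matching reduction of Lemma~\ref{lemma:assign-remaining-elements}; your construction (unplaced elements matched to ranks $1,\dots,d-k$, with edge weights equal to the rightward-displacement cost, and the same $O(nd^2+d^3)$ accounting) is that reduction written out, and your remark that the contribution of $D_\tau$ is an additive constant makes explicit a point the paper leaves implicit.
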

\begin{proof}
    A reduction to an instance of minimum cost perfect matching on a complete bipartite graph similar to that in Lemma~\ref{lemma:assign-remaining-elements} can be applied.
\end{proof}

\paragraph{Description of the algorithm}

The algorithm is similar to that of Section~\ref{sec:approx-full}. First, we compute for each group $G_i$ the values $\alpha_i' := |G_i| - \lceil \beta_i \cdot k \rceil$ and $\beta_i' := |G_i| - \lfloor\alpha_i \cdot k \rfloor$. Now, apply algorithm $\mathcal{B}$, with input the set of rankings $S$, with parameters $d-k$ and $\alpha' = (\alpha'_1, \cdots, \alpha'_g), \beta' = (\beta'_1, \cdots, \beta'_g)$ to obtain a $(\bar{\alpha'}, \bar{\beta'})$-constrained bottom-$d-k$ ranking $\tau$. Then, we extend $\tau$ to a full ranking $\sigma$ using Lemma~\ref{lemma:bottom-k-completion}.

%\begin{corollary}
%\label{corollary:bottom-k-list}
%    Let $S$ be a set of rankings over $[d]$, where the candidates are partitioned into $g$ groups $G_1, \cdots, G_g$ and parameters $k < d$, $\bar{\alpha'} = (\alpha_1', \cdots, \alpha_g'), \bar{\beta'} = (\beta'_1, \beta'_2, \cdots, \beta'_g)$ be given. Consider the problem of finding a $(\bar{\alpha}, \bar{\beta})$-constrained bottom-$k$ list $\tau$ that minimizes the objective cost $\sum_{\pi \in S} \sum_{i \in D_{\tau}} |\tau(i) - \pi(i)|$. There exists a polynomial time algorithm that solves this problem.
%\end{corollary}

Recall from Proposition~\ref{prop:equaldisplacement} we have that $F(\pi, \sigma) = 2 \sum_{i \in [d]} (\sigma(i) - \pi(i)) \cdot \mathbbm{1}_{\pi(i) < \sigma(i)}$. Therefore, it holds that $\obj(\sigma) = 2 \sum_{\pi \in S} \sum_{i \in [d]} (\sigma(i) - \pi(i)) \cdot \mathbbm{1}_{\pi(i) < \sigma(i)}$. Let us then define for any ranking $\sigma$ \[ \overrightarrow{\obj}(\sigma_D) = 2 \sum_{\pi \in S} \sum_{i \in D} (\sigma(i) - \pi(i)) \cdot \mathbbm{1}_{\pi(i) < \sigma(i)}.  \] 

Let $\sigma^*$ be any optimal ranking, and as before let $L^* \subseteq [d]$ be the set of elements placed in the top-$k$ ranks of $\sigma^*$ and let $R^* = [d] \setminus L^*$.

\begin{lemma}
\label{lemma:sigmaprime-analysis}
    Consider the ranking $\sigma$ in Algorithm~\ref{alg:fair-aggregation-right}. It holds that $\sigma$ is a fair ranking, and $\obj(\sigma) \le \overrightarrow{\obj}(\sigma^*_{R^*}) + \opt$.
\end{lemma}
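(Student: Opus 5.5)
The proof mirrors Lemma~\ref{lemma:sigma-analysis} with left and right reversed: the bottom block now plays the role the top block played before. Let $R$ be the set of elements placed by $\tau$ in ranks $k+1,\dots,d$ (so $R = D_\tau$, $|R| = d-k$) and $L = [d]\setminus R$.

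\emph{Fairness.} The top-$k$ of $\sigma$ is exactly $L$. For each group, the $(\bar{\alpha'},\bar{\beta'})$ constraint on $\tau$ gives
\[
|G_i| - \lceil \beta_i k\rceil \le |G_i \cap R| \le |G_i| - \lfloor \alpha_i k\rfloor .
\]
Since $|G_i\cap L| = |G_i| - |G_i\cap R|$, this is equivalent to $\lfloor \alpha_i k\rfloor \le |G_i\cap L|\le \lceil\beta_i k\rceil$. So $\sigma$ is $(\bar\alpha,\bar\beta)$-$k$-fair. The same equivalence, applied to $\sigma^*$, shows that the bottom block $R^*$ of $\sigma^*$ (restricted to $\sigma^*$) is a feasible $(\bar{\alpha'},\bar{\beta'})$-constrained bottom-$(d-k)$ ranking.

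\emph{Cost of the bottom block.} Since $\mathcal{B}$ optimally minimizes the rightward-displacement objective over all feasible bottom rankings, and $\sigma^*$ restricted to $R^*$ is feasible, we get
\[
\overrightarrow{\obj}(\sigma_R) \le \overrightarrow{\obj}(\sigma^*_{R^*}).
\]

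\emph{Cost of the top block.} Build an auxiliary completion $\tilde\sigma$ that agrees with $\tau$ on $R$. For $a\in L\cap L^*$, set $\tilde\sigma(a)=\sigma^*(a)$; these ranks lie in $[k]$ and are distinct. Place each $a\in L\setminus L^*$ arbitrarily in the remaining ranks of $[k]$. Then:
\begin{itemize}
\item Each $a\in L\cap L^*$ contributes exactly its rightward displacement in $\sigma^*$.
\item Each $a \in L\setminus L^*$ lies in $R^*$, so $\tilde\sigma(a)\le k<\sigma^*(a)$. Hence $(\tilde\sigma(a)-\pi(a))\mathbbm{1}_{\pi(a)<\tilde\sigma(a)} \le (\sigma^*(a)-\pi(a))\mathbbm{1}_{\pi(a)<\sigma^*(a)}$ for every $\pi \in S$.
\end{itemize}
Summing these contributions,
\[
\overrightarrow{\obj}(\tilde\sigma_L)\le \overrightarrow{\obj}(\sigma^*_{L^*})+\overrightarrow{\obj}(\sigma^*_{R^*}) = \opt,
\]
using Proposition~\ref{prop:equaldisplacement} to identify $\opt$ with the total rightward objective. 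The completion from Lemma~\ref{lemma:bottom-k-completion} is optimal among all completions of $\tau$, so $\overrightarrow{\obj}(\sigma_L)\le\overrightarrow{\obj}(\tilde\sigma_L)$.

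\emph{Combining.} By Proposition~\ref{prop:equaldisplacement},
\[
\obj(\sigma)=\overrightarrow{\obj}(\sigma_L)+\overrightarrow{\obj}(\sigma_R)\le \opt+\overrightarrow{\obj}(\sigma^*_{R^*}).
\]

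The main obstacle is bookkeeping rather than any real difficulty. One must check that the complement transformation $\alpha',\beta'$ exactly captures fairness, so that $\sigma^*$'s bottom block is feasible for $\mathcal{B}$. One must also use the correct sign direction: elements of $L\setminus L^*$ move \emph{left} relative to $\sigma^*$, which can only decrease their rightward displacement.
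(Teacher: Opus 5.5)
Your proof is correct and takes the same route as the paper. The complement parameters $(\bar{\alpha'},\bar{\beta'})$ give both the fairness of $\sigma$ and the feasibility of $\sigma^*$'s bottom block for $\mathcal{B}$. The mirrored version of the argument in Lemma~\ref{lemma:sigma-analysis} then bounds the top block by $\opt$: you use an auxiliary completion $\tilde\sigma$, in which the elements of $L\setminus L^*$ move left, so their rightward displacement can only shrink. You also supply the details that the paper leaves as ``by similar argument.''
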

\begin{proof}
    Recall that the fairness constraints require that for a group $G_i$, there are at least $\lfloor\alpha_i \cdot k \rfloor$ and at most $\lceil \beta_i \cdot k \rceil$ candidates. Observe that an equivalent formulation of the fairness constraint is to require that the bottom $d-k$ ranks contain at least $|G_i| - \lceil \beta_i \cdot k \rceil$ and at most $|G_i| - \lfloor\alpha_i \cdot k \rfloor$ candidates. By the definition of $(\bar{\alpha'}, \bar{\beta}')$, $\sigma$ must be a fair ranking. 
    
    Then by similar argument as that for Lemma~\ref{lemma:sigma-analysis} we can argue that $\overrightarrow{\obj}(\sigma'_{R'}) \le \overrightarrow{\obj}(\sigma^*_{R^*})$ and that $\obj(\sigma'_{L'}) \le \overrightarrow{\obj}(\sigma^*_{L^*}) + \overrightarrow{\obj}(\sigma^*_{R^*})$.

    Therefore, it holds that
    \begin{align*}
        \obj(\sigma) &= \overrightarrow{\obj}(\sigma_{L'}) + \overrightarrow{\obj}(\sigma_{R'}) \\
        & \le \overrightarrow{\obj}(\sigma^*_{L^*}) + \overrightarrow{\obj}(\sigma^*_{R^*}) + \overrightarrow{\obj}(\sigma^*_{R^*})\\
        & \le \overrightarrow{\obj}(\sigma^*_{R^*}) + \opt
    \end{align*}
\end{proof}

\begin{proof}[Alternative proof of Theorem~\ref{thm:approx-full-ranking}]
    From Lemma~\ref{lemma:sigmaprime-analysis}, the ranking $\sigma$ output by Algorithm~\ref{alg:fair-aggregation-right} satisfies $\obj(\sigma) \le \overrightarrow{\obj}(\sigma^*_{R^*}) + \opt$.

    Therefore, 

    \begin{align*}
        \obj(\sigma) & \le \overrightarrow{\obj}(\sigma^*_{R^*}) + \opt \\
        & \le \overrightarrow{\obj}(\sigma^*_{[d]}) + \opt \\
        & \le \opt + \opt = 2 \opt . 
    \end{align*}
\end{proof}

\begin{algorithm}[!h]
\caption{\textsc{Fair Rank Aggregation}}
\label{alg:fair-aggregation-right}
\begin{algorithmic}[1]
\STATE {\bfseries Input:} Set of rankings $S$
    \STATE For each group $G_i$, compute $\alpha_i' := |G_i| - \lceil \beta_i \cdot k \rceil$ and $\beta_i' := |G_i| - \lfloor\alpha_i \cdot k \rfloor$.
    \STATE $\tau \gets (\bar{\alpha'}, \bar{\beta'})$-constrained bottom $d-k$ ranking of $S$ using Algorithm $\mathcal{B}$.
    \STATE $\sigma \gets $ full ranking from extending $\tau$ with $S$ using the algorithm of Lemma~\ref{lemma:bottom-k-completion}.
    \STATE \text{Return} $\sigma$
\end{algorithmic}
\end{algorithm}

\section{Additional details on experiments}
\label{sec:appendix-experiment}
\paragraph{Improved Algorithm}
Here, we formally provide Algorithm~\ref{alg:full-fair-aggregation} as implemented in the experiments. Essentially, it runs both Algorithm~\ref{alg:fair-aggregation} and Algorithm~\ref{alg:fair-aggregation-right} and returns the better of the two rankings found. Note that the output of this modified algorithm (Algorithm~\ref{alg:full-fair-aggregation}) is still a $2$-approximation in theory. However, we observe that this algorithm performs better in practice on the Movielens dataset.

\begin{algorithm}[!h]
\caption{\textsc{Fair Rank Aggregation}}
\label{alg:full-fair-aggregation}
\begin{algorithmic}[1]
\STATE {\bfseries Input:} Set of rankings $S$
    \STATE $\tau \gets $ top-$k$ fair ranking of $S$ from applying algorithm $\mathcal{A}$.
    \STATE $\sigma \gets $ full ranking from using Lemma~\ref{lemma:assign-remaining-elements} with $S$ and $\tau$ as input.
    \STATE For each group $G_i$, compute $\alpha_i' := |G_i| - \lceil \beta_i \cdot k \rceil$ and $\beta_i' := |G_i| - \lfloor\alpha_i \cdot k \rfloor$.
    \STATE $\tau' \gets (\bar{\alpha'}, \bar{\beta'})$-constrained bottom $d-k$ ranking of $S$ using Algorithm $\mathcal{B}$.
    \STATE $\sigma' \gets $ full ranking from extending $\tau'$ with $S$ using the algorithm of Lemma~\ref{lemma:bottom-k-completion}.
    \IF{$\obj(\sigma) \le \obj(\sigma')$}
        \STATE \text{Return} $\sigma$
    \ELSE
        \STATE \text{Return} $\sigma'$
    \ENDIF
\end{algorithmic}
\end{algorithm}

%\paragraph{RQ2:} We bolster our empirical findings on the performance of our algorithm for fair rank aggregation under the Kendall-tau metric by studying the results on the football dataset as well, and plot the results in figure~\ref{fig:football-ktobj}. We observe that our algorithm performs consistently slightly worse than KT (which is also expected) in all instances on this dataset. However, it still finds a fair aggregate ranking with cost at most 2\% worse than that found by KT. Our algorithm continues to perform significantly better than BFI in practice, finding a fair aggregate ranking between 4\% to 10\% better than that of BFI across all instances.

%\begin{figure}[t]
%    \centering
%    \includegraphics[width = 0.5\linewidth]{images/football_week4_ktobj.pdf}
%    \caption{Results on football dataset for fair rank aggregation under Kendall-tau distance.}
%    \label{fig:football-ktobj}
%\end{figure}

\section{Hardness results}
\label{sec:hardness}
It is known that the problem of rank aggregation under Spearman footrule distance (without fairness) can be solved exactly in polynomial time by a reduction to an instance of minimum cost perfect matching~\cite{Dwork2002RankAR}. In this section, we show that a similar approach is unlikely to provide an exact algorithm for the problem of fair rank aggregation under Spearman footrule distance.

First, let us define the following problem.

\begin{definition}[Fair Colorful Weighted Perfect Matching]
Given $G = (V \cup W, E)$ be a weighted complete bipartite graph with bipartition $V, W$, a subset $Z \subset W$, and a function $col$ mapping vertices of $V$ to a set of colors $C = \{c_1, \cdots, c_g\}$. Further, for each color $c_i$ let $\alpha_i, \beta_i \in [0, 1]$ be given, with $\bar{\alpha} = (\alpha_1, \cdots, \alpha_g), \bar{\beta} = (\beta_1, \cdots, \beta_g)$, and a target value $t$. Then the objective is to find a perfect matching $M$ of value at least $t$ such that the following constraint holds: $\forall c_i \in C$, $\alpha_i \cdot |Z| \le |\{ e = (u,v)\in M \mid col(u) = C_i, v \in W\}| \le \beta_i \cdot |Z|$.
\end{definition}

An instance of fair rank aggregation can be reduced to an instance of the maximization version of fair colorful weighted perfect matching. Construct the vertex set $V$ by creating a vertex $v_i$ for each $i \in [d]$. Construct the vertex set $W$ by creating a vertex $w_j$ for each $j \in [d]$. Let $Z$ the set $\{w_1, \cdots, w_k \}$. The weight of an edge $e = (v_i, w_j)$ for $v_i \in V$ and $w_j \in W$ is set to be - $\sum_{\pi \in S} | \pi(i) - j |$. The fairness parameters are exactly the $\bar{\alpha}, \bar{\beta}$ as in fair rank aggregation. Note that the edge weights are negative, as rank aggregation seeks to minimize cost, and fair colorful weighted perfect matching aims to maximize cost. Given a solution (i.e., a matching) to the fair colorful weighted perfect matching problem, if a vertex $v_i$ is matched to $w_j$, then place element $i$ at rank $j$ in $\sigma$. One can observe that if the matching has cost $-k$, then the ranking $\sigma$ has $\obj(\sigma) = k$, thus the reduction clearly holds.

We prove that the problem of fair colorful weighted perfect matching is, in fact, \texttt{NP}-hard. Therefore, we cannot hope to solve fair rank aggregation by solving this matching problem, or, for instance, formulating it as a linear program similar to~\ref{LP:topkonly}.

\begin{theorem}
\label{thm:np-hard-matching}
    Fair Colorful Weighted Perfect Matching is \texttt{NP}-Hard.
\end{theorem}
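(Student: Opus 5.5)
We will prove NP-hardness by a polynomial-time reduction from a known NP-hard problem. The natural candidate is the \emph{exact perfect matching} problem's weighted relative, or more directly a problem with a hard counting/packing flavour such as \textsc{3-Dimensional Matching} or \textsc{Exact Cover by 3-Sets}. We choose a reduction from a balanced-partition style problem: \textsc{Subset Sum} / \textsc{Partition} with polynomially bounded weights is not enough, so instead we reduce from \textsc{Independent Set} or \textsc{Exact Cover}, where colors encode combinatorial structure. The concrete plan is to use the colors and the per-color count constraints on $Z$ to force a \emph{selection} structure, and the edge weights on the complete bipartite graph to enforce \emph{consistency}, which is exactly where the general unconstrained matching polytope loses integrality once a family of side constraints is added. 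Since the paper's \ref{LP:topkonly} is totally unimodular when only one ``side'' of positions is constrained, hardness must exploit the interaction between colors defined on $V$ and many colors jointly; with $g$ part of the input, this is the lever.

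Concretely, we reduce from \textsc{Exact Cover by 3-Sets} (X3C): a universe $U$ with $|U|=3q$ and a family $\mathcal{F}$ of $3$-subsets. For each set $F_j \in \mathcal{F}$ create three vertices in $V$, one per element of $F_j$, all given color $c_j$; for each element $u\in U$ create a vertex $w_u\in Z$. Add padding vertices to $W\setminus Z$ (and, if needed, to $V$ with a dummy color having unconstrained $\alpha=0,\beta=1$) so $|V|=|W|$. Edge weights: an edge from the copy of $u$ in $F_j$ to $w_u$ has weight $1$, any other edge into $Z$ has weight $-M$ for a large $M$, and all edges into $W\setminus Z$ have weight $0$. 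For each set color $c_j$ impose that the number of $c_j$-vertices matched into $Z$ lies in $\{0,3\}$; since the definition only allows an interval $[\alpha_j|Z|,\beta_j|Z|]$, we instead enforce ``all or nothing'' via weights: give the three copies of $F_j$ an auxiliary construction (a small gadget of $W\setminus Z$ vertices with one private slot shared by the three copies) so that matching exactly one or two copies into $Z$ forces a negative-weight edge. Set target $t=3q$. Then a matching of value $\ge 3q$ exists iff every $w_u$ receives its own element-copy from a set, and the gadget forces chosen sets to be taken wholly, i.e.\ an exact cover.

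The key steps in order are: (i) fix the X3C instance and build $V,W,Z$, colors and weights; (ii) prove completeness: from an exact cover, match copies of chosen sets to their elements in $Z$ and route all remaining vertices into padding/gadget slots with weight $0$, meeting the color bounds and value $3q$; (iii) prove soundness: a matching of value $\ge 3q$ uses no $-M$ edge, so each $w_u$ gets a correct copy, and the gadget/color constraints show the copies in $Z$ come in full triples, giving an exact cover; (iv) note the construction is polynomial. Also remark that allowing weights of either sign matches the maximization reduction given above.

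The main obstacle is step (iii), specifically enforcing the all-or-nothing behaviour of each set, because the fairness constraint only gives interval bounds per color and cannot directly express $\{0,3\}$. If the gadget approach proves awkward, the fallback is to use colors differently: assign each \emph{element} $u$ its own color with $\alpha_u|Z|=\beta_u|Z|=1$ style bounds is useless since $Z$ is on the $W$ side, so instead I would switch to a reduction from \textsc{3-Dimensional Matching}, letting colors index one coordinate with exact count $1$ (taking $\alpha_i=\beta_i=1/|Z|$) and using $Z$ plus weights to encode the other two coordinates, where exact per-color counts are natural and the hardness comes from combining the bipartite matching with the third-coordinate color constraint.
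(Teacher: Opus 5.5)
You correctly identify the obstacle, but the all-or-nothing step cannot be completed as you describe it. Any interval $[\alpha_j|Z|,\beta_j|Z|]$ that contains both $0$ and $3$ also contains $1$ and $2$, so your per-set colors $c_j$ impose nothing. Your gadget consists only of weighted edges into slots of $W\setminus Z$, which no color constraint sees. Your instance is therefore an ordinary maximum-weight perfect matching problem on a complete bipartite graph, which the Hungarian algorithm solves in polynomial time. A weight-only gadget enforcing $\{0,3\}$ would thus give a polynomial algorithm for X3C, so none exists unless \texttt{P} $=$ \texttt{NP}. (Without all-or-nothing, ``every $w_u$ receives a correct copy'' is just a system of distinct representatives.)

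The missing idea is a way for the color constraints on $Z$ themselves to couple separate matching decisions. Your 3DM fallback has the same hole. Exact count one per color, i.e.\ $\alpha_i=\beta_i=1/|Z|$, is the right primitive. But each vertex of $V$ is matched to a single vertex of $W$, so enforcing two further coordinates of a triple needs two vertices whose choices are tied together, and you do not say how.

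The paper provides exactly this coupling, reducing from (3,3)-SAT. Write $Tj$ for $(x_i^T,j)$ and $Fj$ for $(x_i^F,j)$, $j=1,2,3$. For each variable $x_i$ the paper creates these six copies, with $Tj$ and $Fj$ sharing a private color $c_{i,j}$. It adds three private vertices $(x_i,a),(x_i,b),(x_i,c)\in Z$, joined to the six copies by weight-$1$ edges forming an alternating $6$-cycle: $a$ to $T1,F3$; $b$ to $T2,F1$; $c$ to $T3,F2$.

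Exactly one vertex of each color may enter $Z$. The target $t=3n+m$ forces every $Z$ vertex, and every clause vertex (placed in $W\setminus Z$), to be matched along a weight-$1$ edge. Then choosing $T1$ for $a$ forces $T2$ for $b$ and $T3$ for $c$, and symmetrically on the $F$ side. So only the all-$T$ and all-$F$ configurations survive, and the three copies left free are precisely the literal occurrences that can serve the clause vertices.

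Starting from SAT, where a clause needs only some true literal, also spares the reduction from enforcing exactness. An X3C-based version would still have to handle exactness on top of such a gadget.
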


Our reduction will be via a variant of $3$-SAT, where each variable appears in at most three clauses, which is known to be \texttt{NP}-Hard~\cite{tovey1984simplified}. We formally define it below.

\begin{definition}[(3,3)-SAT]
Consider a boolean formula $B = C_1 \vee C_2 \vee \cdots \vee C_m$ on variables $x_1, \cdots, x_n$ where each clause contains at most three variables, and each variable appears in at most three clauses. Is the formula $B$ satisfiable?
\end{definition}

\begin{theorem}[\cite{tovey1984simplified}]
    (3,3)-SAT is \texttt{NP}-Hard.
\end{theorem}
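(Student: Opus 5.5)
We prove NP-hardness by a polynomial-time reduction from ordinary 3-SAT, which is NP-hard by the Cook--Levin theorem together with the standard reduction from SAT to 3-SAT. We read $B$ as a conjunction of clauses, i.e.\ a CNF formula; the $\vee$ between clauses in the definition is taken to be a typo for $\wedge$. The plan follows Tovey's variable-splitting construction. We keep every clause at width at most three and replace each frequently occurring variable by a family of copies whose values are forced to be equal.

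\textbf{Construction.} Let $\Phi$ be a 3-CNF formula on variables $x_1,\dots,x_n$. Suppose variable $x$ occurs $t \ge 2$ times in total, counting positive and negative occurrences. We introduce fresh variables $x^{(1)},\dots,x^{(t)}$ and replace the $j$-th occurrence of $x$ by $x^{(j)}$, keeping its sign. We then add the $t$ two-literal clauses
\[
(x^{(1)} \vee \neg x^{(2)}),\ (x^{(2)} \vee \neg x^{(3)}),\ \dots,\ (x^{(t)} \vee \neg x^{(1)}).
\]
These clauses encode the implication cycle $x^{(2)}\Rightarrow x^{(1)}$, $x^{(3)}\Rightarrow x^{(2)}$, $\dots$, $x^{(1)}\Rightarrow x^{(t)}$. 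Variables occurring only once are left unchanged; alternatively the construction can be applied to every variable. If $t \le 3$ no splitting is strictly needed, but applying it uniformly is harmless.

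\textbf{Occurrence and width counts.} Each copy $x^{(j)}$ appears exactly once in an original clause and exactly twice in the cycle clauses: positively in $(x^{(j)} \vee \neg x^{(j+1)})$ and negatively in $(x^{(j-1)} \vee \neg x^{(j)})$. So each copy appears in exactly three clauses. Original clauses keep width at most three, and the new clauses have width two. The new formula $\Phi'$ therefore satisfies the (3,3) restriction. Its size is $O(|\Phi|)$, since the total number of occurrences is at most three times the number of clauses, so the reduction runs in polynomial time.

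\textbf{Correctness.} Suppose $\Phi$ has a satisfying assignment. Setting every copy $x^{(j)}$ equal to $x$ satisfies all cycle clauses, because each has the form $(a \vee \neg a)$ under equal values. The modified original clauses are also satisfied, since their literal values are unchanged.

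Conversely, suppose $\Phi'$ is satisfied. Following the implications around the cycle gives $x^{(1)} \Rightarrow x^{(t)} \Rightarrow x^{(t-1)} \Rightarrow \cdots \Rightarrow x^{(1)}$. Hence all copies of $x$ take the same value, which we assign to $x$. Every original clause of $\Phi$ then evaluates exactly as its modified version in $\Phi'$, so $\Phi$ is satisfied.

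\textbf{Main obstacle.} The only delicate point is making the equality gadget use each copy at most twice while still forcing all copies to agree. A naive gadget using pairwise equivalences $x^{(j)}\Leftrightarrow x^{(j+1)}$ would use each copy up to four times. The single directed cycle solves this: it achieves global equality with exactly two extra occurrences per copy. Once this gadget is fixed, the counting and correctness checks above are routine.
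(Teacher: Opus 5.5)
Your reduction is correct and is essentially Tovey's original proof: you split each variable into one copy per occurrence and chain the copies with the cyclic implication clauses $(x^{(j)} \vee \neg x^{(j+1)})$, which gives each copy exactly three occurrences while forcing all copies to agree. The paper itself gives no proof of this statement and simply cites \cite{tovey1984simplified}, so there is nothing further to compare; you are also right to read the $\vee$ between clauses in the paper's definition as a typo for $\wedge$.
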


We now describe the reduction, constructing the vertex sets $V, W$ and the weighted edge set $E$.

We construct a `variable gadget' as follows. For each variable $x_i$, add $6$ vertices to $V$ and $3$ vertices to $W$. Let the vertices added to $V$ be labeled $( x_i^T, 1 ), ( x_i^F, 1 ), ( x_i^T, 2 ), ( x_i^F, 2 ), ( x_i^T, 3 ), ( x_i^F, 3 )$. Let the vertices added to $W$ be labeled $( x_i, a ), ( x_i, b ), ( x_i, c )$. Further, add three unique colors $c_{i,1}, c_{i,2},c_{i,3}$, where $( x_i^T, j ), ( x_i^F, j )$ are mapped to color $c_{i,j}$ for each $j \in \{1, 2, 3\}$. Finally, add 6 edges of weight 1 as follows. $( x_i, a )$ has an edge to $( x_i^T, 1 )$ and $( x_i^F, 3 )$. $( x_i, b )$ has an edge to $( x_i^T, 2 )$ and $( x_i^F, 1 )$. $( x_i, c )$ has an edge to $( x_i^T, 3 )$ and $( x_i^F, 2 )$. Further, let the three vertices added to $W$ also be in the subset $Z$. See Figure~\ref{fig:variable-gadget} for an illustration of how the edges are constructed.

\begin{figure}[t]
    \centering
    \includegraphics[width = 0.7\linewidth]{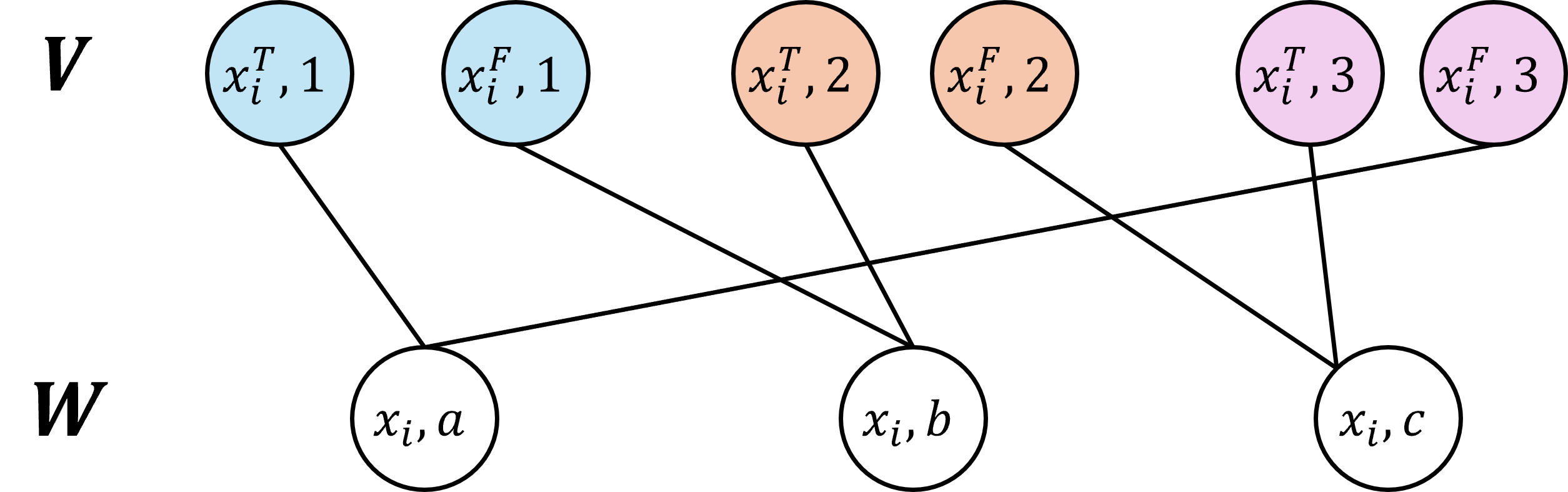}
    \caption{Illustration of the variable gadget in the reduction for a variable $x_i$. Drawn edges are of weight 1, and edges of weight 0 are omitted. Note that the three colors in the reduction are distinct from the colors used for other variables.}
    \label{fig:variable-gadget}
\end{figure}

\begin{figure}[t]
    \centering
    \includegraphics[width = 0.7\linewidth]{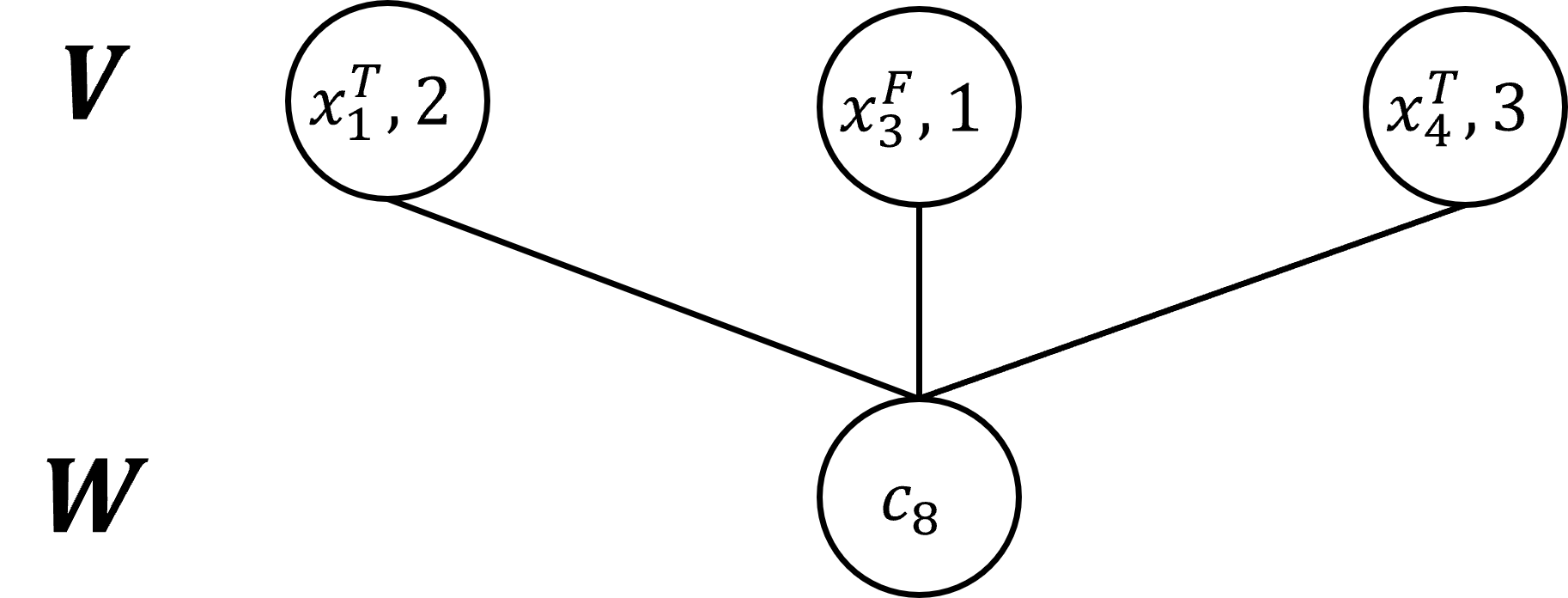}
    \caption{Illustration of the clause gadget in the reduction for a clause $C_8 = x_1 \lor \bar{x}_3 \lor x_4$. Drawn edges are of weight 1, and edges of weight 0 are omitted. Colors omitted as they are not relevant for the clause gadget.}
    \label{fig:clause-gadget}
\end{figure}

Next, we construct a `clause gadget' as follows. Take any arbitrary ordering of the clauses. For each clause $C_i$, add a vertex $C_i$ to $W$. For each variable $x_j$ in $C_i$, suppose this is the $r$-th appearance of the variable in the formula. If $x_j$ appears non-negated, then construct an edge of weight 1 from vertex $( x_j^T, r )$ to vertex $C_i$. If $x_j$ appears negated, then instead construct an edge of weight 1 from vertex $( x_j^T, r )$ to vertex $C_i$.

Finally, we add dummy vertices to $W$ until it is of the same cardinality as $V$. To ensure the graph constructed is a complete bipartite graph, let all other edges that are not given a weight in the construction so far be given a weight of 0. Next, let the target value $t$ in the instance of matching be set to $3n + m$. Lastly, we set the color constraints as $\alpha_i = \beta_i =  1/3n$ for all $i \in [g]$. That is, exactly one vertex of each color in $V$ can be matched to vertices of $Z$.

Now, we argue about the correctness of the reduction. We first show the YES instance of $(3,3)$-SAT correctly maps to a YES instance of fair colorful weighted perfect matching.

\begin{lemma}
\label{lemma:np-hardness-forward}
    If the instance of $(3,3)$-SAT is satisfiable, then the constructed instance of maximum cost perfect matching is satisfiable.
\end{lemma}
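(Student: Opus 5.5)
Given a satisfying assignment $\phi$ of the $(3,3)$-SAT formula, I will build an explicit perfect matching $M$ of weight at least $t = 3n+m$ that meets the color constraints. The weight comes from two sources. Each variable gadget contributes $3$ weight-$1$ edges into its part of $Z$. Each clause vertex $C_i$ contributes one weight-$1$ edge. All remaining vertices are paired with dummy vertices of $W$ through weight-$0$ edges.

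\emph{Variable gadgets.} Fix a variable $x_i$. If $\phi(x_i)$ is true, I match $(x_i,a),(x_i,b),(x_i,c)$ to $(x_i^F,3),(x_i^F,1),(x_i^F,2)$ respectively. If $\phi(x_i)$ is false, I match them to $(x_i^T,1),(x_i^T,2),(x_i^T,3)$. All six are weight-$1$ edges of the gadget. In either case exactly one vertex of each color $c_{i,1},c_{i,2},c_{i,3}$ is matched into $Z$. Since $|Z|=3n$ and $\alpha_j=\beta_j=1/3n$, this is exactly what the constraint requires. The vertices of $Z$ are precisely the gadget $W$-vertices, so no other vertex of $V$ is matched into $Z$, and the constraint holds for every color. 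The key point is that for a true variable the three copies $(x_i^T,r)$ stay \emph{free}, and for a false variable the copies $(x_i^F,r)$ stay free.

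\emph{Clause gadgets.} Each clause $C_i$ has a literal made true by $\phi$; say it is the $r$-th appearance of $x_j$. If the literal is positive, then $x_j$ is true, $(x_j^T,r)$ is free, and I match it to $C_i$. If the literal is negated, the edge should go to $(x_j^F,r)$. I read the paper's second ``$(x_j^T,r)$'' in the clause-gadget description as a typo for $(x_j^F,r)$, since otherwise the gadget does not encode negation. Then $x_j$ is false, $(x_j^F,r)$ is free, and I match it to $C_i$. Distinct clauses use distinct vertices, because each pair (variable, appearance index $r$) corresponds to a single clause occurrence. This gives $m$ further weight-$1$ edges, and $C_i \notin Z$, so the color counts are unaffected.

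\emph{Completion.} The vertices of $V$ still unmatched number $|V| - 3n - m$. The dummy vertices added to $W$ number exactly the same, since $|W|=|V|$. I match them arbitrarily using weight-$0$ edges, which exist because the graph is complete bipartite. The result is a perfect matching of weight $3n+m=t$ satisfying all color constraints.

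The only delicate step is verifying that the clause edges never reuse a gadget vertex already matched into $Z$. This holds because the assignment choice matches exactly the copies corresponding to the false literal polarity. I would also check the boundary cases where a variable appears fewer than three times: unused copies simply go to dummies, and the three-appearance indexing still works.
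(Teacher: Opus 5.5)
Your proposal is correct and follows the paper's proof essentially step for step. It uses the same truth-value-dependent matching of $(x_i,a),(x_i,b),(x_i,c)$ inside each variable gadget, one weight-$1$ edge from each clause vertex to a still-free copy of a satisfying literal, and an arbitrary weight-$0$ completion onto the dummy vertices. Your reading of the negated clause edge as going to $(x_j^F,r)$ is the intended one, and your total weight $3n+m$ is the correct count meeting the target $t$ (the paper's proof writes $3m+n$, a typo).
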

\begin{proof}
    Construct a matching $M$ as follows. If a variable $x_i$ is set to True in the satisfying assignment of $(3,3)$-SAT, then match $(x_i^F, 1)$ with $( x_i, b )$, match $(x_i^F,2)$ with $(x_i, c)$ and match $(x_i^F,3)$ with $(x_i, a)$. If $x_i$ is set to False, then match $(x_i^T, 1)$ with $( x_i, a )$, match $(x_i^T,2)$ with $(x_i, b)$ and match $(x_i^T,3)$ with $(x_i, c)$. Observe that the matching constructed indeed satisfies the fairness constraints.

    Consider any clause $c_j$, which must be satisfied by at least one variable. Pick any one of the variables that satisfies it, and let it be $x_i$. By the construction, the vertex $c_j$ must have an edge of weight one to some vertex corresponding to $x_i$, and further, this vertex is not yet part of a matching in $M$. Thus, these two vertices can be matched, adding a cost of 1. 
    
    Finally, complete the perfect matching by arbitrarily matching the remaining vertices of $V$ and $W$. 

    It is clear that the matching satisfies the color constraints on $Z$ and achieves a cost of $3m + n$, showing the constructed instance is satisfiable.
\end{proof}

We now prove that a NO instance of $3$-SAT correctly maps to a NO instance of fair maximum cost perfect matching.

\begin{lemma}
\label{lemma:np-hardness-back}
    If the constructed instance of fair maximum cost perfect matching is satisfiable, then the $3$-SAT instance is satisfiable.
\end{lemma}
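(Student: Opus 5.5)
The plan is a counting argument: the target value $t=3n+m$ forces every possible weight-$1$ edge slot in $W$ to be used. The colour constraints then force each variable gadget into one of two ``pure'' states, and these states define a satisfying assignment. Throughout, I read the negated-literal clause edge as going from $(x_j^F,r)$ to $C_i$. This is clearly the intended construction; the text repeats $(x_j^T,r)$, which looks like a typo.

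\textbf{Step 1 (every weight-$1$ slot is used).} Let $M$ be a perfect matching of value at least $3n+m$ that satisfies the colour constraints. Every edge has weight $0$ or $1$, so the value of $M$ is the number of weight-$1$ edges in $M$. Each vertex of $W$ lies in at most one edge of $M$. The vertices of $W$ that have any incident weight-$1$ edge are exactly the $3n$ gadget vertices $(x_i,a),(x_i,b),(x_i,c)$ and the $m$ clause vertices; dummy vertices have only weight-$0$ edges. Hence $M$ contains at most $3n+m$ weight-$1$ edges, and reaching $t$ means every gadget vertex and every clause vertex is matched along a weight-$1$ edge. In particular, each $(x_i,\cdot)$ is matched inside its own gadget, because its only weight-$1$ neighbours lie there.

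\textbf{Step 2 (each gadget is all-$T$ or all-$F$).} There are $3n$ colours and $|Z|=3n$, with $\alpha_j=\beta_j=1/3n$. So exactly one vertex of each colour $c_{i,j}$ is matched into $Z$. Fix $x_i$ and consider the $6$-cycle of weight-$1$ edges
\[(x_i,a)\!:\!\{T1,F3\},\quad (x_i,b)\!:\!\{T2,F1\},\quad (x_i,c)\!:\!\{T3,F2\}.\]
Suppose $(x_i,a)$ takes $(x_i^T,1)$. Then colour $c_{i,1}$ is used, so $(x_i,b)$ cannot take $(x_i^F,1)$ and must take $(x_i^T,2)$. By the same reasoning, $(x_i,c)$ must take $(x_i^T,3)$. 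Symmetrically, if $(x_i,a)$ takes $(x_i^F,3)$, then all three $F$-vertices are used. So in each gadget either all three $T$-vertices or all three $F$-vertices are matched into $Z$, and the other three vertices remain available.

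\textbf{Step 3 (the assignment satisfies every clause).} Set $x_i=\text{True}$ if the gadget of $x_i$ consumes its $F$-vertices, and $x_i=\text{False}$ otherwise. By Step 1, each clause vertex $C_i$ is matched along a weight-$1$ edge to some $(x_j^T,r)$ (when $x_j$ appears positively) or $(x_j^F,r)$ (when it appears negated). That vertex is not matched into $Z$. In the positive case, Step 2 then gives that the gadget of $x_j$ consumed its $F$-vertices, so $x_j$ is True and the literal is satisfied. In the negated case, the gadget consumed its $T$-vertices, so $x_j$ is False and $\bar x_j$ is satisfied. Hence every clause is satisfied.

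The main step is Step 2, where the colour constraint rules out mixed gadget matchings. The rest is bookkeeping, once Step 1 is stated carefully: weights are in $\{0,1\}$ and only the $3n+m$ designated $W$-vertices can carry weight.
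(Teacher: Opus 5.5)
Your proof is correct and takes the same route as the paper's. The target $t=3n+m$ forces every gadget vertex and every clause vertex of $W$ to be matched along a weight-$1$ edge; the colour constraints then force each variable gadget to be all-$T$ or all-$F$; and the clause vertices' matches certify that the resulting assignment is satisfying. Your propagation around the $6$-cycle spells out the paper's ``exhaustive check.'' Your readings of the typos agree with what the paper's argument actually uses: the negated-literal edge goes to $(x_j^F,r)$, the count is $3n+m$ rather than $3m+n$, and the colour constraint counts matches into $Z$.
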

\begin{proof}
    Consider a satisfying matching $M$ to the constructed instance of fair maximum cost perfect matching. Observe that only $3m + n$ vertices of $W$ have an edge of weight 1 incident on them. Therefore, in the matching $M$, all of those vertices must be matched using an edge of weight 1.

    Fix some $x_i$ and consider its corresponding vertices of $W$, namely $( x_i, a )$, $( x_i, b )$, $( x_i, c )$. One can exhaustively check that the only possible ways of matching which ensure all three vertices are matched using an edge of cost 1, while satisfying the fairness constraints is matching them to all of $(x_i^T, 1), (x_i^T, 2), (x_i^T,3)$ or to all of $(x_i^F, 1), (x_i^F, 2), (x_i^F,3)$. If the matching is to $(x_i^T, 1), (x_i^T, 2), (x_i^T,3)$, then in the (3,3)-SAT instance set $x_i$ to False. Else, in the (3,3)-SAT instance, set $x_i$ to True. This gives an assignment of the variables.

    Now, take any clause $c_j$. Suppose that the vertex $c_j$ is matched to vertex $(x_i^T, j)$ for $j \in \{1, 2, 3\}$. Then by the above argument, it must hold that the vertices $(x_i^F, 1), (x_i^F, 2), (x_i^F, 3)$ must be matched to $( x_i, a )$, $( x_i, b )$, $( x_i, c )$, and so the assignment has $x_i$ set to True, thus satisfying $c_j$. A similar argument holds if $c_j$ is matched to a vertex $(x_i^F, j)$ for $j \in \{1,2,3\}$ instead.

    Therefore, the $(3,3)$-SAT instance is satisfied by the assignment.
\end{proof} 

\begin{proof}[Proof of Theorem~\ref{thm:np-hard-matching}]
    It is straightforward to see that the reduction runs in polynomial time, and combined with Lemma~\ref{lemma:np-hardness-forward} and Lemma~\ref{lemma:np-hardness-back}, it completes the proof.
\end{proof}

\end{document}